\def\bbH{\mathbb{H}}
\def\bbI{\mathbb{I}}
\def\bbN{\mathbb{N}}
\def\bbP{\mathbb{P}}
\let\IfStar\@ifstar 
\newcommand*{\eqdef}{\@ifstar{\eqdef@B}{\eqdef@A}}
\newcommand*{\eqdef@A}{\stackrel{{\scriptscriptstyle \mathrm{def}}}{\coloneqq}}
\newcommand*{\eqdef@B}{\coloneqq}
\newcommand*{\defeq}{\@ifstar{\defeq@B}{\defeq@A}}
\newcommand*{\defeq@A}{\stackrel{{\scriptscriptstyle \mathrm{def}}}{\eqqcolon}}
\newcommand*{\defeq@B}{\eqqcolon}
\def\@given@A{{\mkern2mu\mid\mkern1.5mu}}
\def\@given@B{\middle|}
\let\given\@given@A
\newcommand*{\useGiven@B}[1]{\def\given{\@given@B}#1\def\given{\@given@A}}
\mathchardef\equals=\mathcode`=
\def\equals@A{{\mkern0mu\equals\mkern1mu}}
\def\equals@B{{\mkern0.5mu\equals\mkern1.5mu}}
\def\useEquals@A#1{\begingroup\lccode`~=`=\lowercase{\endgroup\def~}{\equals@A}\mathcode`=="8000{#1}\begingroup\lccode`~=`=\lowercase{\endgroup\def~}{\equals}}
\def\useEquals@B#1{\begingroup\lccode`~=`=\lowercase{\endgroup\def~}{\equals@B}\mathcode`=="8000{#1}\begingroup\lccode`~=`=\lowercase{\endgroup\def~}{\equals}}
\newcommand*{\style@A}[1]{\useEquals@A{#1}}
\newcommand*{\style@B}[1]{\useEquals@B{\useGiven@B{#1}}}
\newcommand*{\set}{\@ifstar{\set@B}{\set@A}}
\newcommand*{\set@A}[1]{\style@A{\{#1\}}}
\newcommand*{\set@B}[1]{\style@B{\left\{#1\right\}}}
\newcommand*{\abs}{\@ifstar{\abs@B}{\abs@A}}
\newcommand*{\abs@A}[1]{\style@A{|#1|}}
\newcommand*{\abs@B}[1]{\style@B{\left|#1\right|}}
\newcommand*{\norm}{\@ifstar{\norm@B}{\norm@A}}
\newcommand*{\norm@A}[1]{\style@A{\Vert#1\Vert}}
\newcommand*{\norm@B}[1]{\style@B{\left\lVert\{#1\right\rVert}}
\newcommand*{\lr}{\@ifstar{\lr@B}{\lr@A}}
\newcommand*{\lr@A}[1]{\style@A{(#1)}}
\newcommand*{\lr@B}[1]{\style@B{\left(#1\right)}}
\newcommand*{\LR}{\@ifstar{\LR@B}{\LR@A}}
\newcommand*{\LR@A}[1]{\style@A{[#1]}}
\newcommand*{\LR@B}[1]{\style@B{\left[#1\right]}}
\newcommand*{\vecList}{\@ifstar{\vecList@B}{\vecList@A}}
\newcommand*{\vecList@A}[1]{{\LR{#1}}^\top}
\newcommand*{\vecList@B}[1]{{\LR*{#1}}^\top}
\newcommand*{\prob}[1]{\IfStrEq{#1}{_}{\ProbWithSub}{\bbP\lr{#1}}}
\newcommand*{\ProbWithSub}[2]{{\operatorname*{\bbP}_{#1}(#2)}}
\pgfplotsset{compat=1.17}
\def\citeRepository{\cite{githubRepository}}
\def\keywords#1{\begin{IEEEkeywords}#1\end{IEEEkeywords}}
\def\theLeakage{Rényi-min leakage} 
\def\TheLeakage{Rényi-min leakage} 
\def\theLeakageOptimal{Rényi-min optimal} 
\def\theLeakageThen#1{Rényi-then-#1}
\def\Mb{\texttt{Mb}}
\def\PrpRe{\texttt{PrpRe}}
\def\PopRe{\texttt{PopRe}}
\def\PrpReBa{\texttt{PrpReBa}}
\def\PopReSh{\texttt{PopReSh}}
\def\PopSh{\texttt{PopSh}}
\numberwithin{equation}{section}
\numberwithin{figure}{section}
\theoremstyle{plain}\newtheorem{theorem}{Theorem}
\theoremstyle{plain}\newtheorem{proposition}[theorem]{Proposition}
\theoremstyle{plain}
\theoremstyle{plain}\newtheorem{corollary}[theorem]{Corollary}
\theoremstyle{definition}
\theoremstyle{definition}\newtheorem{specification}[theorem]{Problem specification}
\theoremstyle{remark}\newtheorem{observation}[theorem]{Observation}
\theoremstyle{remark}
\begin{document}

\title{Minimizing Information Leakage under Padding Constraints}

\author{%
\IEEEauthorblockN{Sebastian Simon}
\IEEEauthorblockA{LIX\\
École Polytechnique\\
Paris, France}
\and
\IEEEauthorblockN{Cezara Petrui}
\IEEEauthorblockA{LIX\\ 
École Polytechnique\\
Paris, France}
\and
\IEEEauthorblockN{Carlos Pinzón}
\IEEEauthorblockA{Inria and LIX\\
École Polytechnique\\
Paris, France}
\and
\IEEEauthorblockN{Catuscia Palamidessi}
\IEEEauthorblockA{Inria and LIX\\
École Polytechnique\\
Paris, France}
}


\maketitle

\begin{abstract}

An attacker can gain information of a user by analyzing its network traffic.
The size of transferred data leaks information about the file being transferred or the service being used, and this is particularly revealing when the attacker has background knowledge about the files or services available for transfer.
To prevent this, servers may pad their files using a \emph{padding scheme}, changing the file sizes and preventing anyone from guessing their identity uniquely.
This work focuses on finding optimal padding schemes that keep a balance between privacy and the costs of bandwidth increase.
We consider \theLeakage{} as our main measure for privacy, since it is directly related with the success of a simple attacker, and compare our algorithms with an existing solution that minimizes Shannon leakage.
We provide improvements to our algorithms in order to optimize average total padding and Shannon leakage while minimizing \theLeakage{}.
Moreover, our algorithms are designed to handle a more general and important scenario in which multiple servers wish to compute padding schemes in a way that protects the servers' identity in addition to the identity of the files.

\end{abstract}

\keywords{network traffic analysis, privacy, padding, renyi entropy, leakage, adversary, server identity, file identity}



\section{Introduction}

This paper focuses on the problem of minimizing the information that an adversary can obtain when he analyzes the network traffic of a user.

Network traffic analysis is a powerful tool that is needed for optimizing the routing and transmission speed of data in telecommunications networks~\cite{welzl2005network,pouzols2011mining}, as well as for detecting anomalies~\cite{bhuyan2017network} and possibly some types of attacks.
In addition to this, traffic analysis is, on its own, a subject which sparks various research topics. For instance, there has been discovered evidence of self-similar patterns in network traffic~\cite{park2000self}.
Nonetheless, it can also be used to infer users' demographics~\cite{li2016demographics} and the types of web services they are using~\cite{hajjar2015network,callado2009survey}.
Therefore, network traffic analysis is an active and important research topic in telecommunications, networking, privacy, computer security, as well as in mathematics.


Since the vast majority of modern traffic is encrypted, the research on traffic analysis has increasingly been focused towards the analysis of encrypted traffic, in which the adversary willing to gain information can not observe the content of the messages, but can make inferences based on their timing and sizes.
Although data-encryption provides a strong limitation for the attacker, it is possible, with a substantial degree of accuracy, to detect information about the activity of the users, such as what site is being browsed~\cite{siby2018dns,cherubin2017website} or the format of a streamed video~\cite{wampler2015information}.


Moreover, as pointed out in~\cite{wright2009traffic}, packet sizes and timing are essential tools that can reveal the language of a VoIP call~\cite{wright2007language}, passwords in secure shell logins~\cite{song2001timing}, or even web browsing habits \cite{sun2002statistical, liberatore2006inferring}.
The danger of these inferences is highly increased when the adversary combines several of them, possibly also with additional background knowledge about the user or about application standards, e.g. if Google Meet and Skype use different video formats by default, the attacker would be able to identify which application is being used.

Padding messages is a well-known technique for reducing the quality of the inferences that the traffic observer can do.
It increases the sizes of the files being transferred across a network in order to make them harder to recognize.
More precisely, padding reduces the probability that the attacker infers correctly the file being transferred.
But it also increases the bandwidth use of the network.
Therefore, it is necessary to design padding-schemes that achieve an optimal trade-off between reducing bandwidth overhead and its costs while maintaining a high level of privacy.

We propose different algorithms that minimize \theLeakage{} subject to three different types of bandwidth constraints that occur naturally in practice.
We prove the correctness of all the algorithms described in the paper, test them against brute-force implementations over small datasets, and compare them to other methods in the literature over a large dataset.


\section{Related Works}

The need for padding has been identified already in~\cite{siby2018dns} and solved for specific applications in~\cite{wright2009traffic} and \cite{reed2021optimally}.

This paper is strongly related with the work of Reed and Reiter~\cite{reed2021optimally}, in which the authors propose three padding algorithms, PRP\_Shannon, POP\_Shannon (referred to as \PopSh{} in this document) and PwoD (padding without a distribution), for finding padding schemes that minimize Shannon leakage under different bandwidth constraints.
They formalize the padding problem as a server that pads its files before or while serving them to clients.

We extend~\cite{reed2021optimally} in two ways.
First, we take also into consideration the problem of \emph{server identity protection}, in which several servers coordinate themselves before deploying their optimal padding-schemes such as to protect their identities in addition to the individual identities of the files.
We adequate their POP algorithm to minimize Shannon leakage for this new scenario.
Secondly, we redesign all their algorithms to find padding functions that minimize \theLeakage{}~\cite{Smith:09:FOSSACS} instead of Shannon leakage, and we provide some heuristics for \theLeakageThen{Shannon} and \theLeakageThen{bandwidth} minimization, i.e. finding among all schemes that minimize \theLeakage{}, the one that minimize Shannon leakage or bandwidth respectively.


We believe that \theLeakage{} is a better metric for privacy than Shannon leakage for the the file padding problem.
We justify this claim in this paper by pointing out that \theLeakage{} is directly related via a bijective function with the probability of an attacker's best inference to be true. This correspondence can also be interpreted as an instance (to network traffic analysis) of the general operational interpretation of \theLeakage{} in terms of one-try attacks~\cite{Smith:09:FOSSACS}.
For Shannon, which is more related to information theoretic applications like compression, the attackers' success and the leakage are highly correlated, however, they are not directly related via a linear function.
This fact is corroborated experimentally in~\cite{reed2021optimally}, as well as in our paper.

In general, in the privacy community, metrics are better described in terms of their associated attacker rather than their information-theoretic properties~\cite{Alvim:12:CSF,romanelli2020machine}.
For this particular application, the direct pragmatic connection between \theLeakage{} and a simple adversary success makes it very appealing.
The same argument is used in~\cite{cherubin2017bayes}, whose privacy measure is closely related with ours.








\def\size#1{{|#1|}}

\section{Problem formalization}

Generally speaking, the objective is to determine how much to pad each of the files stored in a server to minimize the information gained by a potential adversary analyzing network traffic.

The files in the server store are denoted as $E=\{e_1,e_2, \dots , e_n \}$.
We assume the files in $E$ to be sorted non-decreasingly by their sizes $\size{e_i}$, and to be accessed with frequency $p_i\in[0,1]$, where $\sum_{i=1}^n p_i=1$.
To denote a random file from the server, we use the random variable $X$ that takes values in $E$ with probabilities $\prob{X = e_i}\eqdef p_i$.
We define the set of file sizes $S\eqdef\set{|e| | e\in E}$ and enumerate its unique elements in increasing order as $S = \{s_1, s_2, \dots , s_m\}$.
Note that $m \leq n$, since there may be different files with the same size. 

Given a fixed multiplicative constraint $c\geq 1$, a \emph{padding function}, or padding scheme, is a (possibly randomized) function $f :E \rightarrow \mathbb{N}$ that satisfies, for all $i \in [1..n]$, $\size{e_i} \leq f(e_i) \leq c \cdot \size{e_i}$ (with probability $1$).
The padding function will determine the size of the files in the server after padding them, and the double inequality represents, on the one hand, the fact that files can only be padded to a size greater than the initial one:
\begin{equation} \label{constraint_1}
    \prob{f(e)\geq\size{e}} = 1,
\end{equation}
and on the other hand, the constraint of avoiding unexpectedly large paddings:
\begin{equation} \label{constraint_2}
    \prob{f(e)\leq c \cdot \size{e}} = 1.
\end{equation}
The latter constraint guarantees both for the server and the client that the bandwidth used is not excessive.
It is multiplicative instead of additive because we consider typical cases in which the largest files are accessed less frequently than the smallest ones, hence, it is desirable to limit more severely the padding added to smaller files.
Also, it might be randomized in case the server has the infrastructure to pad files independently for each transfer request.
Moreover, in the favor of clients, the constraint is enforced for each individual file instead of a single global expectation of bandwidth increase because if a client accesses just one or two files, it is of his interest that the bandwidth is bounded for each of those two files.

The objective is to find a padding function $f:E\to\bbN$ that minimizes a given metric of information leakage $\bbI(\size{X}, f(X))$, while respecting the constraints $\size{e_i}\leq f(e_i) \leq c \cdot \size{e_i}$, for each $i\in\{1,...,n\}$.

In the rest of this section, we describe all the variations of the problem by considering two different leakage functions from the literature, as well as several additional types of constraints that may arise in real-world applications.

\subsection{Leakage functions}

In the literature there are multiple options for quantifying the leakage $\bbI\lr{\size{X}, f(X)}$ of a padding function $f:E \rightarrow \mathbb{N}$.
We will further use the notation $\bbI(f)\eqdef\bbI\lr{\size{X}, f(X)}$ to reduce verbosity since the random variable $X$ is fixed.

In~\cite{reed2021optimally}, they use Shannon mutual information as a leakage measure.
Shannon leakage is based on Shannon entropy $\bbH$ and is given by:
\begin{equation} \label{Shannon_leakage}
    \bbI_1\lr{\size{X}, f(X)} \eqdef \bbH\lr{\size{X}}-\bbH\lr{\size{X}\given f(X)}
\end{equation}

Shannon leakage is a particular case ($\alpha=1$) of a family of leakages $\bbI_{\alpha}$ based on $\alpha$-Rényi entropy.
A particular case of interest for privacy applications is what we call \emph{\theLeakage{}} ($\alpha=\infty$), whose derivation and importance is highlighted in~\cite{palamidessi2018feature} and~\cite{smith2009foundations}:
\begin{equation} \label{Renyi_leakage}
    \mathbb{I}_{\infty}(\size{X},f(X))=\mathbb{H}_{\infty}(\size{X})-\mathbb{H}_{\infty}(\size{X} \mid f(X)).
\end{equation}

Both Shannon leakage and \theLeakage{} are directly related (in one-to-one relation) to the probability of success of an attacker, but the specific assumptions about the attacker and his objective are different.
In the case of Shannon leakage, the attacker is assumed to have the power to perform set-queries of the type "is the secret in \emph{this} set?", and his objective is to guess the secret using the minimal number of queries.
For \theLeakage{}, the attackers' power is reduced to a one-try attack, i.e. he can only make one query of the type "is \emph{this} the secret?", and his objective is to maximize the probability of being correct.
In both cases, to protect against worst-case scenarios, the attacker is assumed to know the server's files' sizes and frequencies, as well as the padding-scheme used by the server.
This knowledge may arise in practice if the server hosts public files and the attacker has had time to request each of the files, possibly multiple times in case of a randomized padding scheme.
Notice that the fact that the server files are public, does not reduce at all the need for privacy.
For instance, although the files of an adult video site are public, the users are very interested that no adversarial third party knows exactly which video is being watched.
Therefore, it is clear from the point of view of attackers, that \theLeakage{} is superior to Shannon leakage as a measure of privacy for this particular application.

We conclude this section by proving an important property of padding functions that minimize \theLeakage{}.
The same holds true for paddings that minimize Shannon leakage, as shown in~\cite{reed2021optimally}.

\begin{proposition} \label{proposition_1}
Let $\mathbb{I}$ denote \theLeakage{}, $S$ the set of sizes of the files and $f:E \rightarrow \mathbb{N}$ a padding-scheme. Then, there is a padding-scheme $f^*:E \rightarrow S$ such that $\mathbb{I}(f^*) \leq \mathbb{I}(f)$ and $\mathbb{I}_{\infty}(f^*) \leq \mathbb{I}_{\infty}(f)$.
\end{proposition}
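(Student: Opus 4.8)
The plan is to obtain $f^*$ not by re-optimizing from scratch but by \emph{post-processing} the output of $f$ through a fixed, data-independent rule, and then to invoke monotonicity of leakage under such post-processing. Concretely, I would define the ``round down into $S$'' map $g:\mathbb{N}\to S$ by $g(y)\eqdef\max\{\,s\in S : s\le y\,\}$ and set $f^*\eqdef g\circ f$. Every value $y$ that $f$ can output on a file $e$ satisfies $y\ge\size{e}\ge s_1$, so the set $\{\,s\in S:s\le y\,\}$ always contains $s_1$ and is finite; hence $g$ is well defined on the range of $f$, and by construction the range of $f^*$ lies in $S$.

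First I would verify that $f^*$ is still an admissible padding scheme, i.e. that it satisfies \eqref{constraint_1} and \eqref{constraint_2}. Fix $e\in E$ and any $y$ attained by $f(e)$ with positive probability. From $y\ge\size{e}$ and $\size{e}\in S$ it follows that $\size{e}$ is one of the candidates defining $g(y)$, so $g(y)\ge\size{e}$; and $g(y)\le y\le c\cdot\size{e}$ is immediate. Therefore $\size{e}\le f^*(e)\le c\cdot\size{e}$ holds with probability $1$ for every $e$, so $f^*$ respects the multiplicative constraint.

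The core step is to show that rounding the observable down cannot increase leakage, which I would argue directly from the joint law of $(\size{X},f(X))$. Writing $a_{s,y}\eqdef\mathbb{P}(\size{X}=s,\ f(X)=y)$, the map $g$ merges all columns $y$ with a common image $y'=g(y)$, so that $\mathbb{P}(\size{X}=s,\ f^*(X)=y')=\sum_{y:\,g(y)=y'}a_{s,y}$. For \theLeakage{} the relevant quantity is the posterior vulnerability $V(\size{X}\mid f(X))=\sum_{y}\max_{s}a_{s,y}$; since the maximum of a sum is at most the sum of the maxima, $\max_{s}\sum_{y:\,g(y)=y'}a_{s,y}\le\sum_{y:\,g(y)=y'}\max_{s}a_{s,y}$, and summing over $y'$ gives $V(\size{X}\mid f^*(X))\le V(\size{X}\mid f(X))$. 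Hence $\mathbb{H}_{\infty}(\size{X}\mid f^*(X))\ge\mathbb{H}_{\infty}(\size{X}\mid f(X))$, and because the prior term $\mathbb{H}_{\infty}(\size{X})$ is unchanged, $\mathbb{I}_{\infty}(f^*)\le\mathbb{I}_{\infty}(f)$. The analogous statement for Shannon leakage is just the data-processing inequality applied to the Markov chain $\size{X}\to f(X)\to f^*(X)$: conditioning on the coarser variable $f^*(X)$ only increases $\mathbb{H}(\size{X}\mid\,\cdot\,)$, whence $\mathbb{I}_1(f^*)\le\mathbb{I}_1(f)$.

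I expect the only delicate point to be the admissibility check in the randomized setting: the single rule $g$ is applied uniformly to every realization $y$ of $f(e)$, so I must ensure it respects the lower bound $\size{e}$ \emph{simultaneously} for all files $e$ that can emit the same $y$. This is exactly what the choice ``largest size not exceeding $y$'' secures, since each such $e$ has $\size{e}\in S$ and $\size{e}\le y$, forcing $g(y)\ge\size{e}$. Once this is in place, the leakage inequality itself is the one-line estimate ``max of a sum $\le$ sum of maxes'' (respectively the data-processing inequality), and no search over padding functions is needed because $f^*$ is produced from $f$ by the fixed map $g$.
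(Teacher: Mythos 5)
Your proposal is correct and follows essentially the same route as the paper: define the round-down map $g(y)=\max\{s\in S: s\le y\}$, set $f^*=g\circ f$, and conclude by data processing along the chain $\size{X}\to f(X)\to f^*(X)$. The only differences are that you prove the min-entropy instance of the data-processing inequality inline (via ``max of a sum $\le$ sum of maxes'' on the posterior vulnerability) where the paper cites it from the literature, and that you additionally verify $f^*$ still satisfies the padding constraints \eqref{constraint_1} and \eqref{constraint_2} --- a check the paper's proof omits but which is worth having.
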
 
\begin{proof}
We consider that the files are sorted in non-decreasing order with respect to their sizes, denoted by the set $S = \{s_1, s_2, \dots, s_m\}$.
We have $X$ the random variable associated to $E$ and denote the random variable $Y=f(X)$.
We define $g: \mathbb{N} \rightarrow S$ to be the function satisfying $g(f(e_i)) = \max \{s \in S |s \leq f(e_i)\}$.
Denote by $Z$ the random variable $(g \circ f)(X)$ equipped with the probability space of $E$.
We use the Data Processing Inequality for $\size{X}, Y, Z$ as stated in Theorem 6.2 of \cite{m2012measuring} and Theorem 8 of \cite{mciver2014abstract}, and remarked also in Theorem 5.1 of \cite{smith2015recent}.
$Z$ only depends on $Y$ and is conditionally independent of $\size{X}$ because $Z$ is a deterministic function of $Y$.
Since this holds for any $g$-leakage function, we have $ \mathbb{I}(\size{X},f(X))  \geq \mathbb{I}(\size{X}, (g \circ f)(X))$ and $ \mathbb{I_{\infty}}(\size{X},f(X))  \geq \mathbb{I_{\infty}}(\size{X}, (g \circ f)(X))$.
Hence, we proved that there exists a function $f^*=g \circ f:E \rightarrow S$ such that it has a lower leakage than $f$.
\end{proof}

\begin{corollary} \label{corollary_1}
A padding function that has minimal leakage must pad each file to the size of another file in the initial set.
\end{corollary}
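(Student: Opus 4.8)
The plan is to obtain Corollary~\ref{corollary_1} as a direct consequence of Proposition~\ref{proposition_1}, argued by contradiction. Let $f$ be a minimal-leakage padding scheme and suppose, toward a contradiction, that $f$ sends some file outside the set of original sizes, i.e. $f(e_i)\notin S$ for some $i$. I would reuse the map $g:\mathbb{N}\to S$ from the proof of Proposition~\ref{proposition_1}, defined by $g(v)=\max\{s\in S\mid s\le v\}$, and set $f^\star\eqdef g\circ f$. Proposition~\ref{proposition_1} already delivers $\mathbb{I}(f^\star)\le\mathbb{I}(f)$ and $\mathbb{I}_\infty(f^\star)\le\mathbb{I}_\infty(f)$ with range of $f^\star$ contained in $S$; what remains is to turn that weak inequality into the asserted rigidity of optimal schemes.

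First I would record the two order-theoretic properties of $g$ that drive everything. The map $g$ is the identity on $S$ and strictly contracts off $S$: for $v\in S$ one has $g(v)=v$, while for $v\notin S$ one has $g(v)<v$; in either case $g(v)\ge s$ for every $s\in S$ with $s\le v$. Consequently $f^\star$ is again feasible — from $\size{e_i}\le f(e_i)$ and $\size{e_i}\in S$ we get $g(f(e_i))\ge\size{e_i}$, and $g(f(e_i))\le f(e_i)\le c\cdot\size{e_i}$ — and it satisfies $f^\star(e_j)\le f(e_j)$ for every $j$, with \emph{strict} inequality at exactly those files that $f$ padded to a value outside $S$. Thus replacing $f$ by $f^\star$ never raises the leakage and never raises the padded size of any file, lowering it strictly on every file sent outside $S$.

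Next I would close the argument. Since $f$ is minimal, $\mathbb{I}(f^\star)\le\mathbb{I}(f)$ forces $\mathbb{I}(f^\star)=\mathbb{I}(f)$, so $f^\star$ is itself a minimal-leakage scheme with range in $S$; hence padding outside $S$ is never needed, and any optimal scheme can be corrected into one padding every file into $S$ at no cost in leakage and no cost in bandwidth. To upgrade this to the stated ``must,'' I would invoke the refinement the paper adopts later, namely breaking ties among minimal-leakage schemes by total (or average) padding $\sum_i p_i\,f(e_i)$. Under that optimality notion the strict pointwise drop $f^\star(e_i)<f(e_i)$ makes $f^\star$ tie $f$ on leakage while strictly beating it on padding — contradicting the optimality of $f$. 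Therefore no optimal scheme can pad any file outside $S$, which is exactly the claim that an optimal padding function pads each file to the size of another file in the initial set $S$.

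The only place where anything could go wrong is the monotonicity of leakage under the rounding $v\mapsto g(v)$, and that is precisely the content of Proposition~\ref{proposition_1}: $Z=(g\circ f)(X)$ is a deterministic post-processing of $Y=f(X)$, so the Data Processing Inequality yields $\mathbb{I}(f^\star)\le\mathbb{I}(f)$ for every $g$-leakage, Rényi-min included. Hence the corollary needs no new information-theoretic estimate; the entire burden is the elementary observation that $g$ fixes $S$ pointwise and strictly decreases every value outside $S$ while preserving feasibility, so that the correction $f\mapsto g\circ f$ is genuine, bandwidth-nonincreasing, and leakage-nonincreasing.
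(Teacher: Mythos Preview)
Your proof is correct and matches the paper's approach: the corollary is stated immediately after Proposition~\ref{proposition_1} with no separate proof, the paper treating it as a direct consequence of the construction $f^*=g\circ f$. Your extra step of invoking bandwidth tie-breaking to justify the literal ``must'' goes beyond what the paper provides; the paper only ever uses the corollary to restrict the search space to $S$, and never relies on the strict claim that \emph{every} leakage-minimizer already has range in $S$.
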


\subsection{Variations of the padding problem}

The padding problem can have additional restrictions that arise commonly in practice, giving rise to the following variations, of which the first two are also considered in~\cite{reed2021optimally}.

\begin{enumerate}
\item
\textbf{POP (Per-object-padding)} 

In this variation, the server plans to pad the files only once and forever.
This saves the server from costs and complexity associated to repeated padding.
Hence, our objective in this variation is to compute a deterministic padding function $f$.

\item
\textbf{PRP (Per-request-padding)}

In this variation, we assume no additional constraints to the padding function others than Equations \ref{constraint_1} and \ref{constraint_2}. 
This means that the store can pad the files before every transaction.

\item
\textbf{Server identity protection}

Note that although the generic problem has no restriction on the output sizes $f(X)$, Corollary~\ref{corollary_1} implies that the optimal padding is achieved when $f(X)=S$, meaning that the files of the store can only be padded to sizes of other files in the set $E$.
This fact has a critical implication regarding the protection of server identities.
For instance, assume that an attacker who knows that server $A$ and $B$ have file sizes $20, 22, 24$ and $21, 23, 25$ respectively, and he observes a (padded) file of size $22$ being transferred to a user; then he does not know if the file is that of size $20$ or $22$, due to padding, but he knows certainly that the user accessed server $A$.
In order to protect the server identity, we suggest a standardized protocol among the servers.
This protocol involves having all the servers padding the files to the same pre-determined set of sizes $Z$ such that $f(X) \subset Z$.
In practice, $Z$ can be a set of rounded sizes such as $\{5\Mb, 10\Mb, 20\Mb, 30\Mb, \dots \}$.
We keep the padding constant $c$ and conditions \eqref{constraint_1} and \eqref{constraint_2}.
Additionally, we add the constraint that, $\forall {i \in [1..n]}$:
\begin{equation} \label{Condition_extra}
    \set{z \in Z \given z \geq \size{e_i},\; z \leq c \cdot \size{e_i}} \neq \emptyset 
\end{equation}
meaning that the set $Z$ has enough possible padding sizes to be consistent with \eqref{constraint_2}.
Moreover, the protocol aforementioned for protecting the servers holds not only for Per-Object-Padding, but also for Per-Request-Padding, i.e. it does not influence in any way if the padding function $f$ is deterministic or probabilistic, since we can apply the same procedure for both cases.
\end{enumerate}

\subsection{Generalization}

All the aforementioned variations of the padding problem can be written as instances of a single input-output problem specification.
The specification requires to standardize the bandwidth constraints as pairs $[l_i,r_i]$, meaning that file $e_i$ can only be padded to sizes $\set{z_{l_i}, ..., z_{r_i}}$, for some fixed increasing sequence of output values $Z=\set{z_1, ..., z_{|Z|}}$.
Naturally, these pairs must satisfy some minimal properties to reflect a possible instance of the padding problem.
Namely, we say a sequence $\set{\LR{l_i, r_i}}_{i=1}^n$ is \emph{a valid sequence of constraints} for $Z$ if $l_i\leq r_i$ holds for all $i \in [1..n]$, and the sequences $\set{l_i}_{i=1}^{n}$ and $\set{r_i}_{i=1}^{n}$ are non-decreasing and satisfy $l_1 = 1$ and $l_n = r_n = |Z|$.

We now state the problem specification and proceed to explain how it captures all variations of the padding problem.
All the algorithms in following sections follow this specification.

\begin{specification}\label{specification}\,\\
\textbf{Input}:
\begin{itemize}
  \item[]%
  A set $E$ of $n$ files $\set{e_i \given i \in [1..n]}$ with sizes $\size{e_i}$ and frequencies $p_i$;
  a set $Z$ of possible output sizes ($Z=S$ unless otherwise specified) with a valid sequence of constraints $\set{[l_i, r_i]}_{i=1}^n$;
  and a leakage function $\bbI$ that measures the informational leakage of padding-schemes.
\end{itemize}
\textbf{Desired output}:
\begin{itemize}
  \item[] %
  A padding function $f:E\to Z$ (deterministic for POP, possibly randomized for PRP) that minimizes the information leakage $\bbI(\size{X}, f(X))$, where $\prob{X=e_i}\eqdef p_i$ for each $i\in\{1,...,n\}$.
\end{itemize}
\end{specification}

If $f$ is deterministic, it can be encoded as a list $\lr{f(e_i)}_{i=1}^n$, and in the general probabilistic case as the channel-matrix $(p_{ij})$ of size $n \times |Z|$ between the secrets $X$ and the observables $f(X)$.

Firstly, we argue that the specification generalizes the variations POP and PRP.
In both cases, we invoke Corollary~\ref{corollary_1} which mandates $Z=S$, hence the the padding function will have signature $f:E \rightarrow S$.
Condition \eqref{constraint_1} is attained by choosing $l_i\eqdef\min\set{j \in [1..m]\given \size{e_j} \geq \size{e_i}}$ and $r_i \eqdef \max \set{j \in [1..m] \given s_j \leq c \cdot \size{e_i}}$, thus specifying the padding restriction individually for each file.
Note that this also generalizes POP and PRP, allowing the store to have special padding restrictions for each specific file rather than a universal constant $c$, e.g. a list $c_i$ for $i\in [1..n]$.

Secondly, to deal with the server identity protection case, we start with a padding-scheme $f:E \rightarrow Z$, the probability distribution $\mathbb{P}$, and the set $S$, with the purpose of adapting these three to the generalized problem.
For each $i \in [1..n]$, let $L^*_i=\min\set{z \in Z \given z \geq \size{e_i}}$, i.e. the smallest file size in $Z$ that file $e_i$ can be padded to.
Define $e^*_i$ to be a file with size $L^*_i$.
Due to \eqref{Condition_extra}, $L^*_i$ exists, $\forall {i \in [1..n]}$, and consider the multiset $S^*:=\set{L^*_i \given i \in [1..n]}$.
Because each file is mapped injectively to an element of $S^*$, we consider that we pad each element to the closest one in $Z$ and then apply the padding problem for the new files.
Using this and Corollary \ref{corollary_1}, we reduce the problem to finding a padding-scheme $f^*:E^* \rightarrow S^*$, and a sequence of restrictions $[l_i, r_i], i \in [1..n]$ that satisfy the relation $r_i:= \max \set{j \given L^*_j < c \cdot L^*_i}$, where $L^*_j$ is the size of the file in $E^*$.
Note that the sizes of the files are given in the three variations of the problem, but not in the generalization, as these can be reduced to the restrictions sequence.

\section{Algorithms (and proofs)}

In this section, we give algorithms that minimize the \theLeakage{} as defined in \eqref{Renyi_leakage} for the POP and PRP cases, namely \PopRe{} and \PrpRe{}, which contrast those for Shannon mutual information minimization, found in the paper \cite{reed2021optimally}.
The complexities of these algorithms are summarized in Table~\ref{table_complexities}.

\begin{table}[!ht]
\begin{tabular}{|c|c|c|}
  \hline
  Algorithm & Minimizes & WC Runtime complexity\\
  \hline
  \PopRe{} & \theLeakage{} & $O(n^2\, \bar m)$\\
  \PrpRe{} & \theLeakage{} & $O(n\, {\bar m})$\\
  \PopSh{} & Shannon leakage & $O(n\, {\bar m})$\\
  \texttt{ShannonPRP} & Shannon leakage & $O(\textsc{iters}\cdot n\,m)$ \\
  \hline
\end{tabular}
\def\theCaption{
  Complexities.
  Here, $m\leq n$ is the number of different file sizes and ${\bar m} \eqdef (\nicefrac{1}{n})\sum_{i=1}^n r_i-l_i+1$ is the average number of available choices for each file, thus, $O(n \, {\bar m})\leq O(n \, m)$.
}
\caption{\protect\theCaption{}} 
\label{table_complexities}
\end{table}
Notice that the algorithm \texttt{ShannonPRP} is an approximation algorithm and has a runtime complexity that depends on the degree of accuracy imposed by the user and the limit number of iterations $\textsc{iters}$ allowed.
Also, the complexities of the dynamic programming algorithms appear overestimated in the theoretical worst-case scenario when compared to the actual implementations.
For instance, although \PopRe{} has two parameters varying in $[1..n]$, not all combinations need to be calculated in a top-down implementation.


\subsection{\PopRe{}}

In this section we develop the algorithm that minimizes \theLeakage{} in the POP variation following Specification~\ref{specification}.
Before starting with the algorithm, we will prove Observation \ref{observation_1}, which will be used as the main update of the entries of the channel-matrix.

Let $f:E \rightarrow S$ (by the Corollary \ref{corollary_1}) be an optimal padding-scheme.
We want to minimize
    $$\mathbb{I}_{\infty}(\size{X},f(X))=\mathbb{H}_{\infty}(\size{X})-\mathbb{H}_{\infty}(\size{X}\mid f(X)),$$
but since $\mathbb{H}_{\infty}(\size{X})$ is constant in regards to the padding-scheme, maximizing $\mathbb{H}_{\infty}(\size{X}\mid f(X))$ is a sufficient condition.
As introduced in \cite{palamidessi2018feature}, we have:
    $$ \mathbb{H}_{\infty}(\size{X}\mid f(X))= -\log_2 \sum_{j \in [1..m]} \max_{i \in [1..n]} (p_i\cdot \mathbb{P}(f(e_i)=s_j)).$$
Given that $X \rightarrow f(X)$ can be seen as Markov Chain, it is natural to denote $\mathbb{P}(f(e_i)=s_j)=p_{ij}$.
Note that we are in the case of per-object-padding, so $p_{ij} \in \set{0,1} \forall i\in [1..n] \text{ and } j \in [1..m]$.
Because the logarithmic function is increasing, the problem reduces to minimizing:
\begin{equation} \label{sum_min}
\sum_{j \in [1..m]} \max_{i \in [1..n]} (p_i\cdot p_{ij}).
\end{equation}

\begin{figure}[!ht]
    \def\Rcm{0.55cm}
  \def\Ccm{0.4cm}
  \begin{tikzpicture}[x={(0,-\Ccm)}, y={(\Rcm, 0)}, every node/.style={minimum width=\Rcm, minimum height=\Ccm, outer sep=0pt, anchor=north west}]
    \node[fill=gray, minimum width=1*\Rcm] at (2,0) {};
    \node[fill=gray, minimum width=2*\Rcm] at (3,0) {};
    \node[fill=gray, minimum width=3*\Rcm] at (4,0) {};
    \node[fill=gray, minimum width=4*\Rcm] at (5,0) {};
    \node[fill=gray, minimum width=4*\Rcm] at (0,2) {};
    \node[fill=gray, minimum width=2*\Rcm] at (1,4) {};
    \node[fill=gray, minimum width=1*\Rcm] at (2,5) {};
    \node[fill=gray, minimum width=1*\Rcm] at (3,5) {};
    \node at (-0.75,-0.95) {$\vdots$};
    \node at (1,-0.95) {$e_{9}$};
    \node at (2,-1.20) {$e_{10}$};
    \node at (3,-1.20) {$\mathbf{e_{11}}$};
    \node at (4,-1.20) {$e_{12}$};
    \node at (4.3,-0.95) {$\vdots$};
    \node[xshift=-0.2*\Ccm] at (-0.90,0) {$\cdots$};
    \node[xshift=-0.2*\Ccm] at (-0.90,1) {$s_{10}$};
    \node[xshift=-0.2*\Ccm] at (-0.90,2) {$s_{11}$};
    \node[xshift=-0.2*\Ccm] at (-0.90,3) {$\mathbf{s_{12}}$};
    \node[xshift=-0.2*\Ccm] at (-0.90,4) {$s_{13}$};
    \node[xshift=-0.2*\Ccm] at (-0.90,5) {$\cdots$};
    \node at (0,0) {$1$};
    \node at (1,0) {$1$};
    \node at (2,2) {$1$};
    \node at (3,3) {$\mathbf{1}$};
    \node at (4,5) {$1$};
    \node at (5,5) {$1$};
    \node[color=gray] at (0,1) {$0$};
    \node[color=gray] at (1,1) {$0$};
    \node[color=gray] at (1,2) {$0$};
    \node[color=gray] at (1,3) {$0$};
    \node[color=gray] at (2,1) {$0$};
    \node[color=gray] at (2,3) {$0$};
    \node[color=gray] at (2,4) {$0$};
    \node[color=gray] at (3,2) {$0$};
    \node[color=gray] at (3,4) {$0$};
    \node[color=gray] at (4,3) {$0$};
    \node[color=gray] at (4,4) {$0$};
    \node[color=gray] at (5,4) {$0$};
    \draw[xstep=\Rcm, ystep=\Ccm,color=black,xshift=-0.5\pgflinewidth,yshift=0.5\pgflinewidth] (0,0) grid (6, 6);
  \end{tikzpicture}
  \begin{tikzpicture}[x={(0,-\Ccm)}, y={(\Rcm, 0)}, every node/.style={minimum width=\Rcm, minimum height=\Ccm, outer sep=0pt, anchor=north west}]
    \node[fill=gray, minimum width=1*\Rcm] at (2,0) {};
    \node[fill=gray, minimum width=2*\Rcm] at (3,0) {};
    \node[fill=gray, minimum width=3*\Rcm] at (4,0) {};
    \node[fill=gray, minimum width=4*\Rcm] at (5,0) {};
    \node[fill=gray, minimum width=4*\Rcm] at (0,2) {};
    \node[fill=gray, minimum width=2*\Rcm] at (1,4) {};
    \node[fill=gray, minimum width=1*\Rcm] at (2,5) {};
    \node[fill=gray, minimum width=1*\Rcm] at (3,5) {};
    \node at (-0.75,-0.95) {$\vdots$};
    \node at (1,-0.95) {$e_{9}$};
    \node at (2,-1.20) {$e_{10}$};
    \node at (3,-1.20) {$\mathbf{e_{11}}$};
    \node at (4,-1.20) {$e_{12}$};
    \node at (4.3,-0.95) {$\vdots$};
    \node[xshift=-0.2*\Ccm] at (-0.90,0) {$\cdots$};
    \node[xshift=-0.2*\Ccm] at (-0.90,1) {$s_{10}$};
    \node[xshift=-0.2*\Ccm] at (-0.90,2) {$s_{11}$};
    \node[xshift=-0.2*\Ccm] at (-0.90,3) {$\mathbf{s_{12}}$};
    \node[xshift=-0.2*\Ccm] at (-0.90,4) {$s_{13}$};
    \node[xshift=-0.2*\Ccm] at (-0.90,5) {$\cdots$};
    \node at (0,0) {$1$};
    \node at (1,3) {$\mathbf{1}$};
    \node at (2,3) {$\mathbf{1}$};
    \node at (3,3) {$\mathbf{1}$};
    \node at (4,3) {$\mathbf{1}$};
    \node at (5,5) {$1$};
    \node[color=gray] at (0,1) {$0$};
    \node[color=gray] at (1,0) {$0$};
    \node[color=gray] at (1,1) {$0$};
    \node[color=gray] at (1,2) {$0$};
    \node[color=gray] at (2,1) {$0$};
    \node[color=gray] at (2,2) {$0$};
    \node[color=gray] at (2,4) {$0$};
    \node[color=gray] at (3,2) {$0$};
    \node[color=gray] at (3,4) {$0$};
    \node[color=gray] at (4,4) {$0$};
    \node[color=gray] at (4,5) {$0$};
    \node[color=gray] at (5,4) {$0$};
    \draw[xstep=\Rcm, ystep=\Ccm,color=black,xshift=-0.5\pgflinewidth,yshift=0.5\pgflinewidth] (0,0) grid (6, 6);
  \end{tikzpicture}
  \def\theCaption{
    Visualization for Observation~\ref{observation_1}.
    The file with maximal frequency is $e_{11}$, and the left and right padding-schemes are respectively $f$ and $f^*$.
    If the left one is optimal, the right one must be as well.
    }
  \caption{\protect\theCaption{}}
  \label{Figure_1}
\end{figure}

\begin{observation}\label{observation_1}
Let $f$ be a \theLeakageOptimal{} padding-scheme and $e_i$ be the file with the highest associated frequency $p_i$, and assume that $p_{ij}=1$ for some $j \in [1..m]$. Then there exists a padding-scheme $f^*$ with the same \theLeakage{} such that  $p_{kj}=1$ for all $k \in [1..n]$ such that $j\in[l_k..r_k] $.
\end{observation}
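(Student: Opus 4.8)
The plan is to work directly with the quantity to be minimized, namely the sum $\sum_{j\in[1..m]}\max_{i\in[1..n]}\lr{p_i\,p_{ij}}$ from \eqref{sum_min}, and to show that the proposed rearrangement can only decrease it; optimality of $f$ will then force equality. First I would record two structural facts. Since we are in the POP case, $f$ is deterministic, so each row of $(p_{ij})$ has exactly one nonzero entry, equal to $1$, and each column term $\max_i\lr{p_i\,p_{ij}}$ is simply the largest frequency among the files currently mapped to $s_j$. Moreover, because $e_i$ has the globally maximal frequency $p_i=\max_k p_k$ and $p_{ij}=1$, the $j$-th column term already equals $p_i$, and no assignment whatsoever of files to column $j$ can push this term above $p_i$.

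Next I would define $f^*$ explicitly: for every $k\in[1..n]$ with $j\in[l_k..r_k]$ set $f^*(e_k)=s_j$, and leave $f^*(e_k)=f(e_k)$ for all remaining files. This is a legitimate padding-scheme in the sense of Specification~\ref{specification}, since $j\in[l_k..r_k]$ means precisely that $s_j$ is an admissible output size for $e_k$; the file $e_i$ itself already satisfies $j\in[l_i..r_i]$, so it is untouched, and $f^*$ remains a deterministic function $E\to S$.

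Finally I would compare the sums for $f$ and $f^*$ column by column. For column $j$, the files sent there under $f^*$ consist of $e_i$ together with possibly several others, all of frequency $\le p_i$, so the term stays exactly $p_i$. For any column $c\neq j$, the set of files mapped to $s_c$ under $f^*$ is a \emph{subset} of the set mapped there under $f$, because the construction only removes files from their old columns to relocate them into column $j$ and never adds any; hence the maximum over that column cannot increase. Summing over all columns gives $\sum_j\max_i\lr{p_i\,p^*_{ij}}\le\sum_j\max_i\lr{p_i\,p_{ij}}$, i.e. the \TheLeakage{} of $f^*$ is at most that of $f$. Since $f$ is \theLeakageOptimal{}, this inequality must in fact be an equality, so $f^*$ achieves the same \TheLeakage{}, which is the claim.

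The single step carrying the whole argument, and the one I expect to be the crux, is the assertion that loading additional files into column $j$ does not raise its term. This is exactly where the hypothesis that $e_i$ is the globally most frequent file is indispensable: it guarantees that the column-$j$ maximum is already pinned at the largest value $p_i$ attainable by any frequency, so every newly inserted file is absorbed for free. The remaining ingredients—feasibility of each relocation under $[l_k..r_k]$ and the monotonicity of the maximum over the shrinking columns $c\neq j$—are routine.
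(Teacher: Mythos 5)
Your proposal is correct and follows essentially the same route as the paper's own proof: you define $f^*$ by relocating every eligible file into column $j$, observe that the column-$j$ term is pinned at $p_i$ because $e_i$ has the globally maximal frequency, note that every other column only loses files so its maximum cannot increase, and then invoke optimality of $f$ to turn the resulting inequality into an equality. The only cosmetic difference is that you state the construction as a redefinition of the function while the paper writes it as an explicit case analysis on the channel-matrix entries.
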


\begin{proof}
We consider the padding-scheme $f$ to be represented as the channel-matrix between the secrets and the observables. Let $P$ be the $n \times m$ matrix which contains on every entry $(a,b)$ with $a\in [1..n], b \in [1..m]$ the probability $p_{ab}$. When we want to minimize \eqref{sum_min} we sum over each column of the matrix $P$. In particular, on the column $j$ we have $\max_{a \in [1..n]} (p_a\cdot p_{aj}) = p_i$ since $p_i$ is the highest frequency among the frequencies of the files and $p_{ij}=1$. Now, let us consider the padding-scheme $f^*$ such that with the associated matrix $P^*$, moving, on the same line, every 1 that we can to column j:
$$
p^*_{ab} = 
    \begin{cases}
        p_{ab} &  \text{ if }   b \neq j \text{ and } a\in[1..n] \text{ such that }
        j\not \in [l_a..r_a]\\
        1 & \text{ if }    b =j   \text{ and }   a \in [1..n]   \text{ such that }  j\in [l_a..r_a]  \\
        0 & \text{otherwise}
    \end{cases}
$$

On the column $j$ of the matrix $P^*$ we will still have $\max_{a \in[1..n]} (p_a \cdot p^*_{aj}) = p_i$ because the padding-scheme $f^*$ preserves the maximum on column $j$. Moreover, on the rest of the columns, the maximum either decreases or stays the same since we created more entries $p^*_{ab} = 0$, which means that the product $p_a \cdot p^*_{ab} = 0$. However, we chose $f$ to be the \theLeakageOptimal{} padding-scheme and with the observations above, $f$ and $f^*$ give the same leakage.
\end{proof}

Figure~
\ref{Figure_1} depicts an example of a sub-matrix of $P$ as described in Observation \ref{observation_1}. In the figure, we have exactly one entry equal to $1$ in each line because the channel-matrix is stochastic and we are in the POP case. Additionally, the quantity in \eqref{sum_min} represents the sum of the maximum over columns where each $1$ counts for the frequency of the file. Then, the update does not increase the \eqref{sum_min} because the $1$ with maximal frequency dominates its column, and moving all possible $1$'s above or below it does not increase \theLeakage{}.

Using Observation \eqref{observation_1} we can divide the padding problem into sub-problems that minimize \eqref{sum_min} and leverage dynamic programming: $\forall a \leq b \in [1..n]$, we define

$$
D[a][b]= \min_{\text{P channel matrix}} {\sum_{j \in [1..m]} \max_{i \in [a+1..b]} (p_i\cdot p_{ij}}),
$$
i.e. $D[a][b]$ gives the minimal leakage for the sub-problem that pads files from $e_{a+1}$ to $e_b$, under the general constraints.

By convention, we consider $D[i][i]=0$, which will be the base case. To write the recurrence formula, we need to take the file $e_{i_{max}}$ with maximum frequency $p_{i_{max}}, i_{max} \in [a+1,b]$. We go through every size index $k \in [1..m]$ such that $e_{i_{max}}$ can be padded to the size of $s_k$ and we update the channel-matrix according to Observation \ref{observation_1}, i.e. add $1$'s on $k$-th column if we can (taking into consideration the padding constraints) and complete the lines that have a fixed $1$ with $0$'s on the remaining entries. Then, we apply the recurrence on the rows which are not updated, i.e. from $a$ to $a^* \eqdef \max(a,\max_{i \in [1..n]}\{i |r_i < k\})$, and, respectively, from $b^* \eqdef \min(k,b)$ to $b$. Hence,
$$
    D[a][b]= p_{i_{max}}+
    \min_{k \in [l_{i_{max}}..r_{i_{max}}]}{(D[a][a^*]+D[b^*][b])}
$$
After applying the dynamic algorithm program with the aforementioned recurrence, we get the minimization of \eqref{sum_min} in $D[0][n]$, from which we can compute the minimal \theLeakage{}. If we want to recover the channel-matrix itself, in $D[a][b]$ we pass on the index $k$ for which the maximum happens, as an argument. In case of a tie, we choose the smallest index $k \in \set{1, \dots, n}$ in order to reduce average padding. Bandwidth minimization is further analyzed in Section~\ref{further_minimization}. Hence, we know in each sub-interval $[a,b]$ what we pad everything to, so the information is enough to recover the channel matrix. The implementation can be found in~\citeRepository{}.

In Figure~\ref{Figure_representation} we depict the channel-matrix of the files with sizes $S = \{1000, 1050, 1100, 1120, 1140\}$ and associated frequencies $\{22\%, 5\%, 23\%, 12\%, 18\%, 20\% \}$.
As shown in the visual representation of the padding-scheme in the right, we observe that, for both of the existing padded sizes, there are multiple files that are padded to the same element, making them indistinguishable for an attacker. Moreover, the \emph{blue} and \emph{red} bars on the graph indicate the frequencies of the files, respectively, the maximum frequency among the frequencies of the files padded to each specific size. These are used in order to highlight the terms of the sum \eqref{sum_min}.

\begin{figure}[!ht]
  \begin{center}
    \raisebox{1em}{\def\Rcm{0.55cm}
\def\Ccm{0.4cm}
\begin{tikzpicture}[x={(0,-\Ccm)}, y={(\Rcm, 0)}, every node/.style={minimum width=\Rcm, minimum height=\Ccm, outer sep=0pt, anchor=north west}]
    \node[fill=gray, minimum width=1*\Rcm] at (1,0) {};
    \node[fill=gray, minimum width=2*\Rcm] at (2,0) {};
    \node[fill=gray, minimum width=3*\Rcm] at (3,0) {};
    \node[fill=gray, minimum width=4*\Rcm] at (4,0) {};
    \node[fill=gray, minimum width=5*\Rcm] at (5,0) {};
    \node[fill=gray, minimum width=3*\Rcm] at (0,3) {};
    \node at (0,-1.5) {$22\%$};
    \node at (1,-1.5) {$\phantom{0}5\%$};
    \node at (2,-1.5) {$23\%$};
    \node at (3,-1.5) {$12\%$};
    \node at (4,-1.5) {$18\%$};
    \node at (5,-1.5) {$20\%$};
    \node[xshift=-0.2*\Ccm] at (-1,0) {$s_1$};
    \node[xshift=-0.2*\Ccm] at (-1,1) {$s_2$};
    \node[xshift=-0.2*\Ccm] at (-1,2) {$s_3$};
    \node[xshift=-0.2*\Ccm] at (-1,3) {$s_4$};
    \node[xshift=-0.2*\Ccm] at (-1,4) {$s_5$};
    \node[xshift=-0.2*\Ccm] at (-1,5) {$s_6$};
    \node at (0,2) {$1$};
    \node at (1,2) {$1$};
    \node at (2,2) {$1$};
    \node at (3,5) {$1$};
    \node at (4,5) {$1$};
    \node at (5,5) {$1$};
    \node[color=gray] at (0,0) {$0$};
    \node[color=gray] at (0,1) {$0$};
    \node[color=gray] at (1,0) {$0$};
    \node[color=gray] at (1,1) {$0$};
    \node[color=gray] at (1,3) {$0$};
    \node[color=gray] at (1,4) {$0$};
    \node[color=gray] at (1,5) {$0$};
    \node[color=gray] at (2,1) {$0$};
    \node[color=gray] at (2,3) {$0$};
    \node[color=gray] at (2,4) {$0$};
    \node[color=gray] at (2,5) {$0$};
    \node[color=gray] at (3,3) {$0$};
    \node[color=gray] at (3,4) {$0$};
    \node[color=gray] at (4,3) {$0$};
    \node[color=gray] at (4,4) {$0$};
    \node[color=gray] at (5,4) {$0$};
    \draw[xstep=\Rcm, ystep=\Ccm,color=black,xshift=-0.5\pgflinewidth,yshift=0.5\pgflinewidth] (0,0) grid (6, 6);
  \end{tikzpicture}}
    \includegraphics[width=0.49\columnwidth]{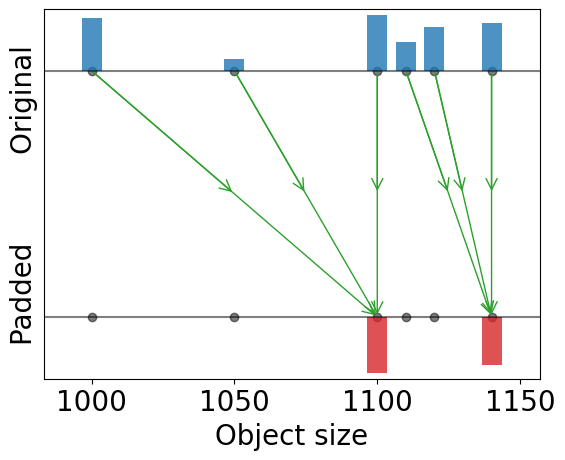}
  \end{center}
  \caption{Visualization of \PopRe{} for a dataset having the padding constraint $c=1.1$ and sizes  $\{1000, 1050$, $1100, 1110$, $1120, 1140\}$}
  \label{Figure_representation}
\end{figure}

\subsection{\PrpRe{}}

In this section, we treat the case of Per-Request-Padding and provide an algorithm for finding the probabilistic channel-matrix $P$ which minimizes the \theLeakage{}.
We will look at the joint distribution matrix $I$ with entries $I_{ij} = p_i \cdot p_{ij}, \forall i\leq n, j \leq m$, for which the row-wise sum $\sum_{j=1}^{m} I_{ij}$ is equal to $p_i, \forall i \in [1..n]$.

We proceed by finding iteratively, for each of the $m$ columns, starting from the last one, the \theLeakageOptimal{} manner of setting the entries of $I$ given the padding constraints. We define the \emph{optimal distribution of $p_i$ across the $i$-th row, $1 \leq i \leq n$} to be the way we fill in the entries $p_{i1}, \dots, p_{im}$ such as to obtain the minimum sum of the type \eqref{sum_min} and preserve the relation $p_{i1} + ... + p_{im} = p_i$.

The proof of our algorithm requires us to consider sub-problems in which the sequence $(p_i)_{1\leq i \leq n}$ is updated at each step of the algorithm, thus being different from the initial set of frequencies associated to each file. Hence, we rewrite the problem as a more general one in terms of a \emph{budget} sequence $\lr{b_i}_{1 \leq i\leq n}$ of length $n$ (initialized as $\lr{p_i}_{1 \leq i \leq n}$), which dictates the remaining value to be distributed across each row $i, \text{ for } i\in [1..n]$.
The general problem is stated below: 
\begin{adjustwidth}{1em}{1em}
  Given a non-negative budget sequence $\lr{b_i}_{i=1}^k$ of length $k\in [1..n]$, find a solution matrix $I_{k\times m}$ that minimizes Equation~\eqref{sum_min}, under the padding constraints for rows $i\in [1..k]$, namely the set $\{[l_1, r_1], \dots, [l_k, r_k]\}$ and $\sum_{j=1}^{m} I_{ij} = b_i$.
\end{adjustwidth}

We will design the algorithm to solve the general problem recursively by returning the matrix $I$ for the budget sequence $\{p_1, \dots, p_n\}$ with $n$ terms. The recurrence relationship can be described using the following observation that will be used when we create the probabilistic channel-matrix for the padding-scheme $f$:

\begin{observation} \label{observation_3}
    The solution $I_{k\times m}$ for a given $\lr{b_i}_{i=1}^k$ that minimizes Rényi leakage satisfies the recurrence relationship
    $$
    I_{ij} = 
    \begin{cases}
        b_i & \text{ if } j =m \text{ and } i \in [1..k], \abs{e_i}=s_m \\
        b_i - b_i^' & \text{ if } j =m \text{ and } i \in [1..k-1], \abs{e_i} \neq s_m, \\
         & m \in [l_i..r_i]\\
        I^'_{ij} & \text{ otherwise} \\
    \end{cases}
    $$
    where $I^'_{(k-t) \times (m-1)}$ is the solution to the same minimization problem for the sequence $\lr{b^'_i}_{i=1}^{k-t}$ of length $k-t$, $t = $ number of files from $E$ which can be padded to $s_m$, such that for any $i \in [1..k-t]$, it is defined as:
    $$
    b_i^' = 
    \begin{cases}
        \max(b_i - b_{t_{max}}, 0) & \text{ if } m \in [l_i..r_i] \text{ and }\\
        &  b_{t_{max}} = \max\{b_i | \abs{e_i}=s_m\}\\
        b_i & \text{ otherwise}
    \end{cases}
    $$
\end{observation}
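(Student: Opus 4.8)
The plan is to prove the recurrence by induction on the number of admissible columns $m$, peeling off the largest size $s_m$ and reducing to the same problem on columns $1,\dots,m-1$. Write the objective \eqref{sum_min} as $\sum_{j=1}^m M_j$ where $M_j\eqdef\max_i I_{ij}$, and let $\mathrm{OPT}(b,m)$ be its minimum over all feasible matrices for a budget sequence $b$ and $m$ columns. The base case $m=1$ is trivial, since every row is forced into the single column and $M_1=\max_i b_i$. For the inductive step, the first point is that a file with $\abs{e_i}=s_m$ satisfies $l_i=r_i=m$ (its only admissible target is $s_m$, because $s_m$ is maximal and padding only increases size); hence all of its budget lands in column $m$, and therefore $M_m\geq B$ where $B\eqdef b_{t_{max}}=\max\{\,b_i : \abs{e_i}=s_m\,\}$.

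I would then sandwich $\mathrm{OPT}(b,m)$ between two matching bounds. The upper bound is the greedy assignment described in the Observation: each file with $m\in[l_i..r_i]$ deposits $\min(b_i,B)$ into column $m$ and forwards the residual $b_i'=\max(b_i-B,0)$ to the lower columns. This is feasible, keeps $M_m=B$, and by the induction hypothesis the lower columns are filled optimally for the residual budgets $b'$, so $\mathrm{OPT}(b,m)\leq B+\mathrm{OPT}(b',m-1)$; rows with $b_i'=0$ (all size-$s_m$ files, and every reachable file with $b_i\leq B$) simply drop out, which is the sense in which the recursion acts on the smaller instance $I'$. The substance of the statement is the reverse inequality $\mathrm{OPT}(b,m)\geq B+\mathrm{OPT}(b',m-1)$, which I would prove by converting an arbitrary feasible $I$ into a feasible solution of the instance $(b',m-1)$ of cost at most $\sum_{j=1}^m M_j-B$.

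The subtle step, and the one I expect to be the main obstacle, is this conversion when column $m$ is over-filled, i.e. $M_m=B+\delta$ with $\delta>0$ because several rows place more than $B$ in column $m$. Naively restricting $I$ to columns $1,\dots,m-1$ fails, since an over-filling row $i$ carries only $\tilde b_i=b_i-I_{im}<b_i'$ below column $m$ and is thus short by $b_i'-\tilde b_i=I_{im}-B$. The fix uses the monotone structure of a valid constraint sequence: any over-filler has size strictly below $s_m$, so (as $Z=S$) $l_i<m$ and hence $m-1\in[l_i..r_i]$, meaning every such deficit can legally be routed into the single column $m-1$. Because a column contributes its maximum rather than its sum, pooling all deficits there raises $M_{m-1}$ by at most $\max_i(I_{im}-B)\leq M_m-B=\delta$, while reducing the rows that overshoot $b_i'$ only lowers the other maxima. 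The resulting matrix is feasible for $(b',m-1)$ with cost at most $\sum_{j<m}M_j+\delta=\sum_{j=1}^m M_j-B$, which yields the reverse inequality. Combining the two bounds gives $\mathrm{OPT}(b,m)=B+\mathrm{OPT}(b',m-1)$, and unrolling the induction reproduces the three cases of the claimed formula: the forced entries $I_{im}=b_i$ for $\abs{e_i}=s_m$ and $I_{im}=b_i-b_i'$ for the remaining reachable rows are exactly the greedy deposit into column $m$, and the "otherwise" block coincides with the recursively optimal $I'$.
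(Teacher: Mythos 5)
Your proof is correct and uses the same decomposition as the paper: peel off column $m$, force the size-$s_m$ rows into it, deposit $\min(b_i,B)$ for every other reachable row, and recurse on the residual budgets $b_i'$. The difference is in how optimality is established. The paper gives a one-directional exchange argument: it shows that pushing each reachable row's mass into column $m$ \emph{up to} $B=b_{t_{max}}$ cannot hurt, because the column-$m$ maximum is pinned at $B$ anyway and the residuals feeding columns $1,\dots,m-1$ only shrink; it then invokes Proposition~\ref{proposition_1} to argue the recursive solution puts nothing in column $m$. What the paper does not explicitly rule out is an optimal solution that \emph{over-fills} column $m$, raising $M_m$ to $B+\delta$ in the hope of lowering the other column maxima by more than $\delta$. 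Your sandwich argument closes exactly this gap: the reverse inequality $\mathrm{OPT}(b,m)\ge B+\mathrm{OPT}(b',m-1)$, proved by rerouting each row's surplus $I_{im}-B$ into column $m-1$ (legal since any over-filler has $\abs{e_i}<s_m$, hence $l_i\le m-1\le r_i$ by contiguity of the constraint intervals) and observing that this raises $M_{m-1}$ by at most $\delta$ while trimming under-filled rows only decreases maxima. So your route is a strictly more rigorous version of the paper's: the upper bound coincides with the paper's greedy construction, and the lower bound supplies the missing half of the optimality claim. The only cosmetic mismatch is that you work directly with $(m-1)$-column matrices, so the paper's separate step showing $I'_{im}=0$ becomes vacuous, and you should note (as the paper's own statement muddles) that $t$ in the recursion counts the rows that actually disappear, namely those with $\abs{e_i}=s_m$, while reachable rows with $b_i\le B$ merely carry zero residual budget.
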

\begin{proof}
    If there are no files among $\set{e_1, \dots, e_k}$ which can be padded to $s_m$, we set $t=0$ and solve the minimization problem for the same budget sequence and for the set of $m-1$ sizes $\set{s_1, \dots, s_{m-1}}$. 
    
    If there are files that can be padded to $s_m$, then due to the padding constraints, the element $e_i$ can only be padded to $s_m$, so the entry $I_{im}$ must necessarily be equal to $b_i, \text{ for all } i$ such that $\abs{e_i}=s_m$. Let us denote by $T = \set{k-t+1, \dots, k}$ the set of indices satisfying $\abs{e_i}=s_m, \forall i \in T$ and $b_{t_{max}} = \max\{b_i | i \in T\}$. Clearly, for every $i \in T$, $I_{ij} = 0, \forall j \in \set{1, \dots, k-1}$. On the $m$-th column of the matrix $I$, we have $\max_{i \in [1..k]} I_{im} \geq b_{t_{max}}$.
    
    In order to minimize the sum \eqref{sum_min} and taking into consideration that the maximum entry on column $m$ is at least $b_{t_{max}}$, we aim to distribute, for every $i$ such that $e_i$ can be padded to $s_m$ but $\abs{e_i} \neq s_m$, a quantity equal to $b_{t_{max}}$ (or, if $b_i < b_{t_{max}}$, then we distribute the whole $b_i$) on the entry $I_{im}$, such that we preserve the maximum on this last column to be $b_{t_{max}}$. This way, we can assure that, among the other columns, we'll have to distribute a smaller fraction of $b_i$, which means that the maximum on each column between $1$ and $m-1$ will decrease, and so will \eqref{sum_min}. 
    
    The problem reduces to find the optimal sub-matrix $I^'_{(k-t) \times (m-1)}$ to complete the first $k-t$ rows of $I$, and with the aforementioned observation, we can actually consider $I^'$ to be the solution given the updated sequence $(b^'_i)_{1 \leq i \leq k-t}$ which is defined, for every $i$ such that file $e_i$ that can be padded to $s_m$, as either $0, \text{ if } b_i \leq b_{t_{max}}$, or as $b_i - b_{t_{max}}, \text{ if } b_i \geq b_{t_{max}}$. When we reconstruct the matrix $I$, on the $m$-th column we will have the value $I^'_{im} + b_{t_{max}}$ or $I^'_{im} + b_i$ (depending whether $b_i$ is smaller, respectively larger, than $b_{t_{max}}$). 

    Now, let us show that, for the sub-matrix $I^'$, we have $0$'s on every entry of the $m$-th column. By definition, $I^'$ must be a \theLeakageOptimal{} solution for the updated sequence of $b^'_i$'s. Using Proposition \ref{proposition_1}, there exists a \theLeakageOptimal{} padding-scheme $f^'$ which maps $e_i, i \in [1..k-t] \rightarrow \{s_1, \dots, s_{k-t}\}$, for any set of files $\{e_1, \dots, e_{k-t}\}$ with the associated frequencies $\{b^'_1, \dots, b^'_{k-t}\}$. Consequently, for every $i \in [1..k-t], \mathbb{P}(f^'(e_i) = s_m)=0 \Rightarrow I^'_{im} = 0$.  
\end{proof}

Therefore, we have proved that the matrix $I$ can be recursively expressed using the sub-matrices obtained when we update the budget sequence accordingly, at each step decreasing by $1$ the number of columns and by at least $1$ the number of rows of the matrix returned from the algorithm, until we reduce a problem to finding the \theLeakageOptimal{} scheme for a budget sequence with a single element. Since we want to minimize \eqref{sum_min} in the case of $n$ files with frequencies $\{p_1, \dots, p_n\}$ and the associated set of sizes $\set{s_1, \dots, s_m}$, we proceed the induction on the number of rows and columns as described in Observation \ref{observation_3} and eventually fill in all the entries of the solution $I_{n \times m}$. If we want to recover the channel-matrix $P$ of conditional probabilities, we return the matrix with entries:
 $$
 P_{ij} = I_{ij} / p_i, \forall i\in [1..n], j \in [1..m].
 $$

This approach also handles a problem tackled in \cite{reed2021optimally}, in which the distribution of the files' frequencies is unknown. By supposing an uniform distribution of the files, our algorithm \PrpRe{} minimizes the Sibson mutual information of order infinity as introduced in \cite{issa2019operational}, hence solving the problem discussed in \cite{reed2021optimally}.

\section{Algorithms for further minimizations}\label{further_minimization}

We observe that there might be numerous padding-schemes that achieve minimal \theLeakage{}.
In this section we provide improvements to our algorithms in order to optimize average total padding and Shannon leakage after minimizing \theLeakage{}.
That is, we start with a \theLeakageOptimal{} padding scheme found with the algorithms of the previous section and apply several heuristics to reduce the average padding or the Shannon leakage, while preserving minimal \theLeakage{}.

\subsection{Average padding minimization}

In this section, we show improvements for the Per-Object-Padding and Per-Request-Padding cases in order to reduce the average padding while keeping the \theLeakage{} at its minimum.

\begin{enumerate}
\item
\textbf{POP(Per-Object-Padding)}

The heuristic for reducing bandwidth is already part of the design of algorithm \PopRe{}.
The heuristic consists of resolving ties by always choosing to pad the file with maximal frequency to the file with the least possible size.

Technically, this heuristic guarantees we find the padding-scheme with minimal bandwidth among those that are reachable using zero or more \emph{moves} (see Section~\ref{subsection_shannon_heuristic} for all details).
We verified this experimentally, with more than $10{,}000$ random generated tests using the code in~\citeRepository{}, as well as the fact that this heuristic is limited and does not find the minimal bandwidth among all padding functions with optimal \theLeakage{}.

\item
\textbf{PRP(Per-Request-Padding)}
Suppose $f$ is the padding scheme resulting from \PrpRe{} and $I$ is its joint distribution matrix. Let the list $C$ of maximums on each column, i.e. $C=\set{\max_{i\in [1..n]}I_{ij} | j \in [1..m]}$, where $C_j = \max_{i\in [1..n]}I_{ij}$ for every $j \in [1..m]$.

Define a \emph{move} to be a change in the matrix $I$ performed on two of the entries of the matrix on line $i, \text{ for some } i \in [1..n]$ such that $(I_{ia},I_{ib})$ becomes $(I_{ia}-\alpha, I_{ib}+\alpha)$ while keeping the entries of $I$ positive, i.e. $\alpha \leq I_{ia}$.

Now, we will describe an \emph{update} on line $i$, which will consist of a series of \emph{moves} and will return a new matrix $I^*$. We start with $I^*$ to be the matrix $I$, but with $0$'s on the $i$-th line. Since the sum on row $i$ is equal to $p_i$, we start with this quantity and go through the columns in order from $j=1$ to $j=m$. For each column, we set:
$$
I_{ij}=
\begin{cases}
C_j & \text{if } C_j + \sum_{k=1}^{j-1} I_{ik} \leq p_i \\
p_i-\sum_{k=1}^{j-1} I_{ik} & \text{ otherwise }
\end{cases}
$$
The algorithm ends by applying the update on all lines, regardless of their order. As for the case before, we will later discuss the performance of this algorithm. 

\let\underlineFirstLetter\underline 

In the next section we refer to this improvement of the algorithm which performs \underlineFirstLetter average \underlineFirstLetter padding \underlineFirstLetter minimization as \PrpReBa{}.

\end{enumerate}

\subsection{Shannon leakage minimization}\label{subsection_shannon_heuristic}

We aim to improve Shannon leakage while keeping the \theLeakage{} minimal.
For instance, in Figure~\ref{Figure_representation}, if we pad the file with frequency $10\%$ to size $96$ instead of $84$, then \theLeakage{} is preserved while Shannon leakage changes.

In this section, instead of working with the channel-matrix $P=(p_{ab})_{a\in [1..n], b \in [1..m]}$ of dimension $n \times m$, we will talk about the matrix $I=(p_{ab} \cdot p_a)_{a\in [1..n], b \in [1..m]}$. This is intuitive since minimizing the \theLeakage{} is equivalent to minimizing \eqref{sum_min}. However, the sum in \eqref{sum_min} is equal to $\sum_{j \in [1..m]} \max_{i \in [1..n]} (I_{ij})$, i.e. the sum of the maximal entries of the columns of matrix $I$.

We consider a \emph{move} to be a change of a padding-scheme $f$ to a different padding-scheme $f^*$ such that the two functions differ in one element only, and $f^*$ respects the padding constraints inherent to $f$. Our goal is to start from matrix $I$ and apply a series of \emph{moves} in order to eventually reach a matrix that preserves \theLeakage{} but has lower Shannon leakage.

We focus on keeping the \theLeakage{} constant after each single \emph{move}. This does not find the universal \theLeakageOptimal{} padding with the least Shannon leakage, a fact that was tested experimentally~\citeRepository{}, but in practice it performs very well (refer to Section~\ref{section_experiments} for details).

Starting from the initial matrix $I$, we want to get the matrix $I^*$ such that, at any step during the sequence of \emph{moves} applied, the following property holds for all columns $j \in [1..m]$:
$$
  \max_{i \in [1..n]} I_{ij}= \max_{i \in [1..n]} I^*_{ij}
$$

This property ensures that the \theLeakage{} stays the same due to the fact that the quantity in \ref{sum_min} remains constant. By restricting our algorithm to only perform \emph{moves} that respect the above condition, we define a list of positions where the file $e_i$ can be padded to, named $\texttt{poss[i]}$, for each $i$ in $[1..n]$:
$$
\texttt{poss[i]} \eqdef \{j \in [1..m] | j\in [l_i..r_i] \text{ and } \max_{a \in [1..n]} I_{aj} \geq p_i\}.
$$

Allowing to pad file $e_i$ only to files with index in $\texttt{poss[i]}$ will preserve the minimal \theLeakage{}. Now, the method to minimize the Shannon mutual information is rooted in the following underlying observation:

\begin{observation}\label{observation_2}
 Consider $f$ to be a \theLeakageOptimal{} padding-scheme with $P=(p_{ab})_{a \leq n, b \leq m}$ its channel-matrix and let $j\in [1..m]$ be the index of the column of the channel-matrix $(p_{ab})_{a\leq n, b \leq m}$ with the highest associated probability $\mathbb{P}(f(X) = s_j) =  {\sum_{a=1}^{n} p_a \cdot p_{aj} }$. Then, any \emph{move} of $1$'s to the $j$-th column gives a padding-scheme with less Shannon leakage.
 \end{observation}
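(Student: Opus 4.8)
The plan is to first reduce Shannon leakage, in the deterministic (POP) setting, to the Shannon entropy of the output marginal, and then to show that a \emph{move} into the heaviest column is a mass transfer that strictly lowers this entropy. For the reduction I would write Shannon leakage in the symmetric form $\bbH(f(X)) - \bbH(f(X) \given X)$ and use determinism of $f$ to kill the conditional term: identifying the secret with the file $X$, exactly as the $\max_{i \in [1..n]}$ in \eqref{sum_min} already does, gives $\bbH(f(X) \given X) = 0$, so minimizing Shannon leakage is equivalent to minimizing $\bbH(f(X)) = -\sum_{j \in [1..m]} q_j \log_2 q_j$, where $q_j \eqdef \bbP(f(X) = s_j) = \sum_{i \in [1..n]} I_{ij}$ is the $j$-th column sum of $I$. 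This turns the statement into a purely combinatorial claim about column sums.

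Next I would describe the effect of a single move on $q$. A move that re-pads the file $e_i$ from its current column $b$ to the designated column $j$ leaves $I$ unchanged except that the unit in row $i$ jumps from column $b$ to column $j$; on the marginal this transfers the mass $\delta \eqdef p_i$ from $q_b$ to $q_j$, sending $q_b \mapsto q_b - \delta$ and $q_j \mapsto q_j + \delta$ while fixing every other coordinate. Note $q_b \geq I_{ib} = p_i = \delta$, so $q_b - \delta \geq 0$ and the updated vector is still a probability distribution. Since only two coordinates change, the change in Shannon leakage collapses to the one-dimensional quantity $g(\delta) - g(0)$, where $g(t) \eqdef h(q_b - t) + h(q_j + t)$ and $h(x) \eqdef -x \log_2 x$.

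The crux is to show $g(\delta) < g(0)$. Differentiating gives $g'(t) = \log_2\!\big((q_b - t)/(q_j + t)\big)$, and because $j$ is by hypothesis a column of maximal probability we have $q_b \leq q_j$, hence $q_b - t < q_j + t$ for every $t \in (0, \delta]$; thus $g'(t) < 0$ on that interval and $g(\delta) < g(0)$ strictly. Equivalently, this is a regressive transfer from a lighter to a heavier coordinate, so the updated marginal majorizes $q$, and by Schur-concavity of entropy the leakage strictly drops. Either presentation yields that any move of a unit into the heaviest column strictly decreases Shannon leakage, as claimed; combined with the restriction to $\texttt{poss[i]}$, the move simultaneously preserves minimal \theLeakage{}.

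The step I expect to be the main obstacle is the reduction in the first paragraph: pinning down precisely which random variable plays the role of the secret so that $\bbH(f(X) \given \cdot) = 0$ and Shannon leakage is genuinely the column-sum entropy, rather than a quantity that also shifts through the conditional term when files sharing a size are re-padded (i.e. the distinction between $\bbI_1(\size{X}, f(X))$ and $\bbI_1(X, f(X))$). Once that identification is fixed, the remainder is just the short monotonicity computation above; the only side conditions left to verify are $q_b \geq p_i$, which is automatic, and $p_i > 0$, which guarantees the decrease is strict.
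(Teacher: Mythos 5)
Your proposal is correct and follows essentially the same route as the paper: reduce Shannon leakage to $\bbH(f(X))$ via determinism of $f$, view a \emph{move} as a transfer of mass $p_i$ from a lighter column sum to the maximal one, and conclude by concavity of $-x\log_2 x$ (the paper invokes the four inequalities $g(j)>g(k)$, $g^*(j)>g(j)$, $g(k)>g^*(k)$, $g^*(j)-g(j)=g(k)-g^*(k)$ and concavity, which is exactly your regressive-transfer/Schur-concavity step written without the derivative). The subtlety you flag about $\bbH(f(X)\given\size{X})$ versus $\bbH(f(X)\given X)$ when distinct files share a size is glossed over in the paper's proof as well, so you are not missing anything it supplies.
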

\begin{proof}
 The Shannon mutual information is $\mathbb{I}(\size{X},f(X))=  \mathbb{H}(\size{X})-\mathbb{H}(\size{X} \mid f(X))=\mathbb{H}(f(X))-\mathbb{H}(f(X) \mid \size{X})$. Since we are still in the case POP, $f(X)$ is a deterministic function of $\size{X}$, so $\mathbb{H}(f(X) \mid \size{X})=0$. Hence, in order to minimize Shannon mutual information, it is sufficient to minimize: 
\begin{equation} \label{min_Shannon}
 \mathbb{H}(f(X))=- \sum_{j=1}^{m} \mathbb{P}(f(X) = s_j) \cdot \log_2 \mathbb{P}(f(X) = s_j)  
\end{equation}

Suppose there exists $i$ an index such that $p_{ij}=0$, and $i \in [l_j..r_j]$. Then, $ \exists k\neq j$ with $p_{ik}=0$. We want to show that \emph{moving} the $1$ from position $P_{ik}$ to $P_{ij}$ decreases the initial Shannon leakage. Consider the updated scheme $f^*$ resulting after the \emph{move} to column $j$, with the associated channel-matrix $P^*=(p_{ab}^*)_{a \leq n, b \leq m}$:
$$
p^*_{ab} = 
    \begin{cases}
        p_{ab} &  \text{ if } a \neq i \\
        1 & \text{ if } a = i \text{ and }b = j \\
        0 & \text{ otherwise }
    \end{cases}
$$
It is sufficient to prove that the sum \eqref{min_Shannon} is smaller for $f^*$ than for $f$. For simplicity, let us denote $\tilde{p}(x) = - x \cdot \log_2 x$, so \eqref{min_Shannon} can be written as $\sum_{b=1}^{m} \tilde{p}(g(b)) \text{ where for every } b \in [1..m], \text{ we define } g(b) := \mathbb{P}(f(X) = s_b) \text{ and }g^*(b) := \mathbb{P}(f^*(X) = s_b)$.

We see that: 
\begin{align*}
D = \sum_{b=1}^{m} \tilde{p}(g(b)) - \sum_{b=1}^{m} \tilde{p}(g^*(b)) =
 \tilde{p}(g(j)) + \tilde{p}(g(k)) -\\
 \tilde{p}(g^*(j)) - \tilde{p}(g^*(k)).
\end{align*}
But note that $g(j)>g(k)$, $g^*(j)>g(j)$, $g(k) > g^*(k)$ and, moreover, $g^*(j) - g(j) = p_t = g(k) - g^*(k)$. Using the properties of the concave function $\tilde{p}(x)$ on $[0,1]$, we get that $\tilde{p}(g(j)) + \tilde{p}(g(k)) \geq \tilde{p}(g^*(j)) +\tilde{p}(g^*(k))$. Hence $D \geq 0$, so the sum in \eqref{min_Shannon} is smaller for the padding-scheme $f^*$. 

This shows that the \emph{move} from position $P_{ik}$ to $P_{ij}$ decreases Shannon leakage.
\end{proof}

This observation implies that a padding-scheme with optimal \theLeakage{}, satisfying that no further \emph{move} can reduce the Shannon leakage, has a column, call it \emph{maximal} such that any $1 \text{ on line } i$ that could have been moved to that column, given \texttt{poss[i]}, was already \emph{moved} there. This corollary prompts us to leverage dynamic programming and divide the problem into sub-problems. 

We define $D[a][b]$ to be the partial Shannon leakage, looking at the columns from $a$ to $b$ on which we only place non-zero elements in lines $i \text{  such that } \texttt{poss[i]} \subset [a..b]$. So, any file which can be padded to a size $ s_t \text{ with } t \notin [a..b]$, will not be taken into account in the sub-problem $D[a][b]$. We will use the convention $D[j+1][j]=0$ since it acts on no columns, and we consider this to be a base case. With the observation above we deduce the following recurrence formula:
\begin{align*}
D[a][b]= \min_{j \in [a..b] } (D[a][i-1]+D[i+1][b]+ \\
\sum_{x\in \{i| j\in \texttt{poss}[i]\}}-x\cdot \log x)
\end{align*}

Note that this recurrence only keeps in memory the value of the Shannon leakage. If we want to recover the channel-matrix itself, we need to pass on as an argument the column which was \emph{maximal} in $[a..b]$ at each step. This information is enough to reconstruct the channel-matrix given the
\emph{maximal} property of the columns.

Below we give representations of the different outputs returned by the algorithms \PopReSh{} and \PopSh{}. By comparing Figure~\ref{Figure_representation} with Figure~\ref{Figure_why_renyi}, we highlight the reduction in Shannon leakage produced by \PopReSh{}. Moreover, in the case analyzed below, we observe that \PopSh{} gives a higher \theLeakage{} compared to \PopReSh{}. In the case of \PopSh{}, it is obvious for an attacker that the first padded size comes from the first file. An example of the proficiency of \PopReSh{} and \PopRe{} compared to \PopSh{} is that they don't have any file padded in an unique way, making it non-trivial for the attacker to guess the initial file in the example below.
\begin{figure}[!ht]
  \begin{center}
    \includegraphics[width=0.49\columnwidth]{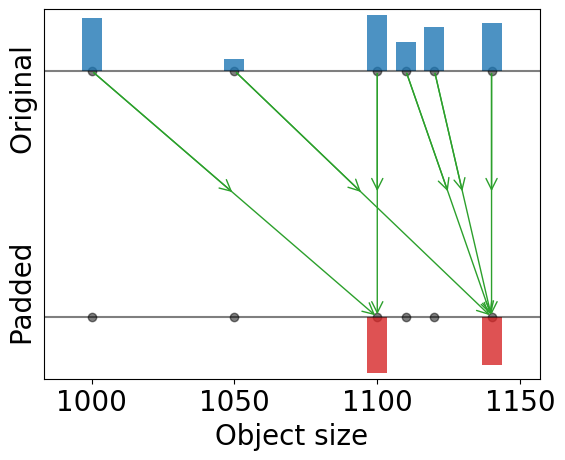}
    \includegraphics[width=0.49\columnwidth]{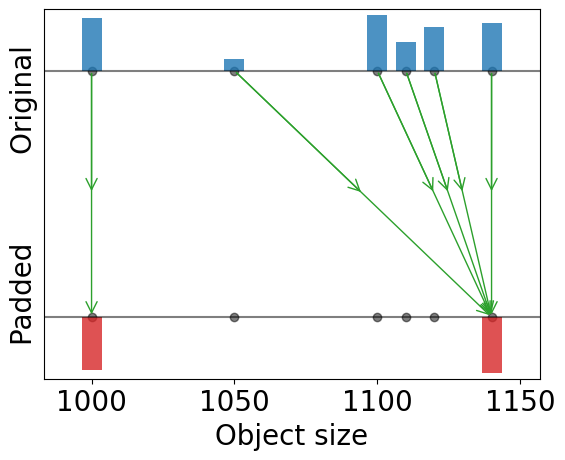}
  \end{center}
  \caption{Padding-schemes returned by \PopReSh{}(left) and \PopSh{} (right) in a fixed example; the value of the padding constraint is $c=1.1$}
  \label{Figure_why_renyi}
\end{figure}


\def\createPattern#1#2{
\pgfdeclarepatterninherentlycolored{horizontal lines #1} 
{\pgfpointorigin}{\pgfpoint{100pt}{4pt}}
{\pgfpoint{100pt}{4pt}}
{ \pgfsetfillcolor{#2!30!white} 
  \pgfpathrectangle{\pgfpointorigin}{\pgfpoint{100pt}{2.5pt}}
  \pgfusepath{fill}
  \pgfsetfillcolor{#2!25!white} 
  \pgfpathrectangle{\pgfpoint{0pt}{2pt}}{\pgfpoint{100pt}{2.5pt}}
  \pgfusepath{fill}
}
}
\createPattern{blue}{blue}
\createPattern{red}{red!80!black}
\createPattern{green}{green!75!black}
\createPattern{orange}{orange!75!black}
\createPattern{violet}{violet!75!black}
\createPattern{black}{black!75!black}

\pgfplotsset{
  adHocBarPlot/.style={
    ybar,
    legend cell align=left,
    bar width=0.12cm,
    height=0.7*0.9*\columnwidth,
    width=0.99*\columnwidth,
    grid=major, grid style={dashed},
    scaled x ticks=false,
    scaled y ticks=false,
    xlabel=$c$,
    enlarge x limits=0.15,
    enlarge y limits={value=0.75, upper},
    yticklabel style={
      /pgf/number format/fixed, 
      /pgf/number format/precision=3,
    },
    mark options={scale=0.4},
    legend pos=north east,
  },
  custom blue/.style={
    draw=blue, pattern=horizontal lines blue,
  },
  custom red/.style={
    draw=red!80!black, pattern=horizontal lines red,
  },
  custom green/.style={
    draw=green!75!black, pattern=horizontal lines green,
  },
  custom orange/.style={
    draw=orange!75!black, pattern=horizontal lines orange,
  },
  custom violet/.style={
    draw=violet!75!black, pattern=horizontal lines violet,
  },
}

\section{Experiments and Comparison}\label{section_experiments}

Several experiments were carried out for three distinct purposes, namely, (1) to test the correctness of the implementations against brute-force algorithms for small sized problems, (2) to corroborate the direct link between \theLeakage{} and the success rate of an attacker and (3) to compare the runtime, bandwidth and leakages of all the algorithms on a large real-world dataset.
The code of all the experiments is available in~\citeRepository{}.

\subsection{Brute-force tests}

All the algorithms developed, including our re-implementation of \PopSh{} to handle the server identity problem, were tested against brute-force implementations for small datasets (with at most $10$ elements).
That is, for each randomly generated test case of file sizes and frequencies, we explored (exhaustively) all the POP padding schemes satisfying the constraints, and chose among them, the ones that minimized \theLeakage{}, Shannon leakage or bandwidth, with the purpose of comparing them with the solutions returned by our algorithms.

We ran tens of thousands of experiments (code available in~\citeRepository{}), all corroborating that:
(a) among all POP schemes, \PopRe{} achieves minimal \theLeakage{} and \PopSh{} achieves minimal Shannon leakage;
(b) \PopReSh{} is always better than \PopRe{};
(c) \PrpRe{} leaks at most the \theLeakage{} of \PopRe{}.

\subsection{Attacker tests}

We simulated a real attacker who observes the padded size of a file transferred in the network and tries to guess the original file.
We assume that the attacker knows the joint distribution matrix $I$ and guesses, for a padded size $s_j, j \in [1..m]$, the file $e_{j^*}, j^* \in [1..n]$ which satisfies $p_{j^*} \cdot p_{j^*j} = \max_{a \in [1..n]} I_{aj}$.
If $A$ is the random variable, taking the value $0$ each time the attacker is wrong, and $1$ if the attacker is right, we see that the expected value of $A$ can be expressed in the following way:
$$\mathbb{E}[A]=\sum_{j=1}^{m} p_{j^*} \cdot p_{j^* j}.$$

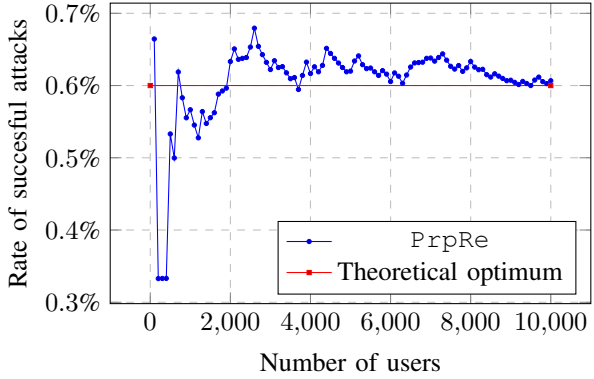
\begin{figure}[!ht]
  \begin{center}
    \begin{tikzpicture}[]
  \begin{axis}[
    height=0.7*0.9*\columnwidth,
    width=0.9*\columnwidth,
    grid=major, grid style={dashed},
    scaled x ticks=false,
    scaled y ticks=false,
    xlabel=Number of users,
    ylabel=Rate of succesful attacks,
    yticklabel style={
      /pgf/number format/fixed, 
      /pgf/number format/precision=3,
    },
    mark options={scale=0.4},
    yticklabel={\pgfmathprintnumber\tick\%},
    legend pos=south east,]
    \addplot table[
      x expr=100*(1+\coordindex),
      y=Y,
    ]{tikz-convergence.txt};
    \addlegendentry{\PrpRe{}};
    \addplot coordinates {(0,0.6) (10000,0.6) };
    \addlegendentry{Theoretical optimum};
  \end{axis}
\end{tikzpicture}
  \end{center}
  \caption{Convergence of the attacker's success}
  \label{Figure_convergence}
\end{figure}

This proves that the expected value of the success of the attacker, by the law of large numbers, should converge to $\mathbb{E}[A]$, which is equal to the value of \eqref{sum_min}.
This means that building a padding-scheme that minimizes the success of the attacker is indeed equivalent to finding the optimal \theLeakage{}, and this can be observed graphically in Figure~\ref{Figure_convergence}.

\subsection{NodeJS dataset tests}


The dataset of NodeJS packages was proposed in~\cite{reed2021optimally}.
This dataset consists of a list of $423{,}450$ javascript packages provided by NPM for browser and nodeJS applications, each with its associated file size and access frequency, as of August 2021.
These packages are used by most websites, which pull them from Content Delivery Networks such as jsDelivr, unpkg or cdnjs, and therefore, having information about pulled packages leaks some information about the website identity, since the specific packages used by each webpage can be though of a fingerprint.
For this security reason, and taking into account the large number of files, we used the NodeJS dataset to benchmark the algorithms.

We used two versions of the NodeJS dataset: the \emph{large} NodeJS dataset is the original dataset with 423,450 files, and the \emph{small} consists of only the 1000 most frequently accessed files.
The small NodeJS dataset allowed us to benchmark and compare the algorithms with large complexity, which timeout on the large dataset.

Firstly, we compared our own developed algorithms \PopRe{}, \PopReSh{}, \PrpRe{}, \PrpReBa{}, and the algorithm \PopSh{} from \cite{reed2021optimally}, which minimizes Shannon mutual information, against the Small NodeJS dataset. We present the following two plots which make the comparison between these algorithm in terms of both min-leakage and Shannon leakage:

\begin{figure}[!ht]
  \begin{center}
    \begin{tikzpicture}[]
  \begin{axis}[
    /pgfplots/adHocBarPlot,
    ylabel=\theLeakage{},
    enlarge y limits=0.25,
    ]
    
    \addplot[custom red,]
      table[x=c, y=renyi, ]{figures/nodejs-1000-data/PRP_Renyi_Bandwidth.txt};
    \addlegendentry{\PrpReBa{}};
    
    \addplot[custom orange,]
      table[x=c, y=renyi, ]{figures/nodejs-1000-data/PRP_Renyi_only.txt};
    \addlegendentry{\PrpRe{}};

    \addplot[custom green,]
      table[x=c, y=renyi, ]{figures/nodejs-1000-data/POP_Renyi_only.txt};
    \addlegendentry{\PopRe{}};
      
    \addplot[custom blue,]
      table[x=c, y=renyi, ]{figures/nodejs-1000-data/POP_Renyi_Shannon.txt};
    \addlegendentry{\PopReSh{}};

    \addplot[custom violet,]
      table[x=c, y=renyi, ]{figures/nodejs-1000-data/POP_Shannon_only.txt};
    \addlegendentry{\PopSh{}};
  \end{axis}
\end{tikzpicture}
  \end{center}
  \caption{\theLeakage{} using the small NodeJS dataset.}
  \label{Figure_npm_renyi}
\end{figure}

In Figure~\ref{Figure_npm_renyi}, we observe that, for the various values of the constant $c$ ranging from $1$ to $1.1$, the \theLeakage{} becomes significantly smaller for the \PrpReBa{} and \PrpRe{} in comparison with the rest of the algorithms presented.
This difference can be noticed from a slight change in $c$ from $1$, which describes that the file sizes are not padded at all, to the value $1.02$.
As the value of $c$ increases, allowing the files to be padded more, the \theLeakage{} is decreased by about $40 \%$ from $c=1$ to $c=1.1$.
We remark that, for each of the $6$ values of the padding constraint $c$, \PrpReBa{} and \PrpRe{} provide a smaller leakage than \PopRe{} and \PopReSh{}. However, the reduction of \PopRe{} is tiny compared to \PopSh{}, suggesting that the algorithm provided by Reed and Reiter~\cite{reed2021optimally} is comparable in terms of \theLeakage{} with our per-object-padding algorithm, however, our per-request-padding algorithm \PrpRe{} is more efficient to reduce information gain of an attacker. 

\begin{figure}[!ht]
  \begin{center}
    \begin{tikzpicture}[]
  \begin{axis}[
    /pgfplots/adHocBarPlot,
    ylabel=Shannon leakage,
    enlarge y limits=0.25,
    ]
    
    \addplot[custom red,]
      table[x=c, y=shannon, ]{figures/nodejs-1000-data/PRP_Renyi_Bandwidth.txt};
    \addlegendentry{\PrpReBa{}};
    
    \addplot[custom orange,]
      table[x=c, y=shannon, ]{figures/nodejs-1000-data/PRP_Renyi_only.txt};
    \addlegendentry{\PrpRe{}};

    \addplot[custom green,]
      table[x=c, y=shannon, ]{figures/nodejs-1000-data/POP_Renyi_only.txt};
    \addlegendentry{\PopRe{}};
      
    \addplot[custom blue,]
      table[x=c, y=shannon, ]{figures/nodejs-1000-data/POP_Renyi_Shannon.txt};
    \addlegendentry{\PopReSh{}};

    \addplot[custom violet,]
      table[x=c, y=shannon, ]{figures/nodejs-1000-data/POP_Shannon_only.txt};
    \addlegendentry{\PopSh{}};
  \end{axis}
\end{tikzpicture}
  \end{center}
  \caption{Shannon leakage using the small NodeJS dataset.}
  \label{Figure_npm_shannon}
\end{figure}
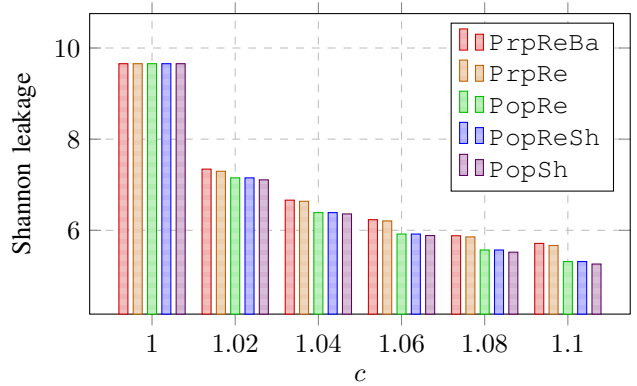

In Figure~\ref{Figure_npm_shannon}, we remark a reversed behavior than the one present in the previous graph.
More specifically, the Shannon leakage in the two per-request-padding case is greater than the one given by the three algorithms using per-object-padding.
Again, as $c$ increases by $10\%$, the Shannon leakage decreases by roughly $40\%$.
Nonetheless, the Shannon leakage returned by our improvement \PopReSh{} is similar overall with the Shannon leakage of \PopSh{}.
Consequently, our algorithm \PopReSh{} gives a better \theLeakage{} and a similar Shannon leakage compared to \PopSh{} developed by~\cite{reed2021optimally}.

\begin{figure}[!ht]
  \begin{center}
    \begin{tikzpicture}[]
  \begin{axis}[
    /pgfplots/adHocBarPlot,
    ylabel=Runtime (seconds),
    legend pos=north west,
    ymax=3,
    ]
    
    \addplot[custom red,]
      table[x=c, y=elapsed, ]{figures/latest-version/1000-PrpReBa.txt};
    \addlegendentry{\PrpReBa{}};
    
    \addplot[custom orange,]
      table[x=c, y=elapsed, ]{figures/latest-version/1000-PrpRe.txt};
    \addlegendentry{\PrpRe{}};

    \addplot[custom green,]
      table[x=c, y=elapsed, ]{figures/latest-version/1000-PopRe.txt};
    \addlegendentry{\PopRe{}};
      
    \addplot[custom blue,]
      table[x=c, y=elapsed, ]{figures/latest-version/1000-PopReSh.txt};
    \addlegendentry{\PopReSh{}};

    \addplot[custom violet,]
      table[x=c, y=elapsed, ]{figures/latest-version/1000-PopSh.txt};
    \addlegendentry{\PopSh{}};
  \end{axis}
\end{tikzpicture}
  \end{center}
  \caption{Small NodeJS runtime plot. The JIT compilation takes $7$ additional seconds.}
   \label{Figure_npm_runtime}
\end{figure}
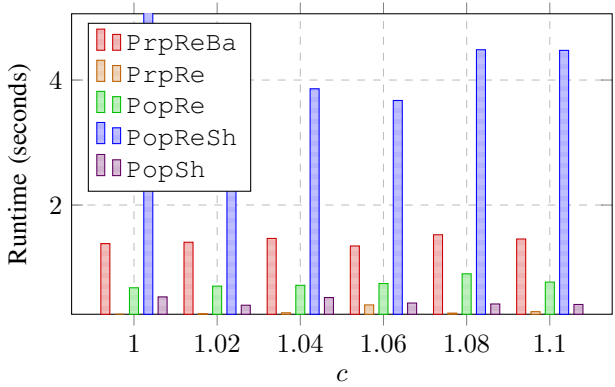

Figure~\ref{Figure_npm_runtime} depicts the runtime of the algorithms under analysis.
As predicted by the complexity analysis explained in section dedicated to algorithms and proofs, \PrpRe{} has the smallest runtime among the algorithms that we developed, and it is similar to \PopSh{} for the Small NodeJS dataset.
However, our algorithm remains more efficient in terms of the running time, as it can be observed for the cases $c=1.08$ and $c=1.1$.
As the value of $c$ increases, the execution time of \PopRe{} increases most significantly as it has complexity $O(n \, m^2)$, $m$ denoting the number of available sizes where to pad the files. Nonetheless, the same behavior occurs in the case of \PrpReBa{}.
The latter has a greater runtime than \PrpRe{}, difference which can be seen starting from $c=1.02$.
Their complexity is polynomially the same,  $O(n\,{\bar m})$ (refer to Table~\ref{table_complexities}), but in practice \PrpRe{} runs faster as it often fills up a row in $O(1)$ rather than $O(m)$ or $O({\bar m})$.
For the algorithm \PopReSh{}, the running time appears to be decreasing in the value of the padding constraint $c$.
This indicates that a higher $c$ allows \PopRe{} to give a result which has a Shannon leakage closer to the optimal, as seen in Figure~\ref{Figure_npm_shannon}, which diminishes the number of improvements that \PopReSh{} can act on and consequently its runtime.

\begin{figure}[!ht]
  \begin{center}
    \begin{tikzpicture}[]
  \begin{axis}[
    /pgfplots/adHocBarPlot,
    ylabel=Bandwidth increase,
    legend pos=north west,
    yticklabel={\pgfmathprintnumber\tick\%},
    ]
    
    \addplot[custom red,]
      table[x=c, y=bandwidth_percent, ]{figures/nodejs-1000-data/PRP_Renyi_Bandwidth.txt};
    \addlegendentry{\PrpReBa{}};
    
    \addplot[custom orange,]
      table[x=c, y=bandwidth_percent, ]{figures/nodejs-1000-data/PRP_Renyi_only.txt};
    \addlegendentry{\PrpRe{}};

    \addplot[custom green,]
      table[x=c, y=bandwidth_percent, ]{figures/nodejs-1000-data/POP_Renyi_only.txt};
    \addlegendentry{\PopRe{}};
      
    \addplot[custom blue,]
      table[x=c, y=bandwidth_percent, ]{figures/nodejs-1000-data/POP_Renyi_Shannon.txt};
    \addlegendentry{\PopReSh{}};

    \addplot[custom violet,]
      table[x=c, y=bandwidth_percent, ]{figures/nodejs-1000-data/POP_Shannon_only.txt};
    \addlegendentry{\PopSh{}};
  \end{axis}
\end{tikzpicture}
  \end{center}
  \caption{Bandwidth increase using the small NodeJS dataset; for reference, the average bandwidth use without padding is $52{,}487$}
   \label{Figure_npm_bandwidth_abs}
\end{figure}
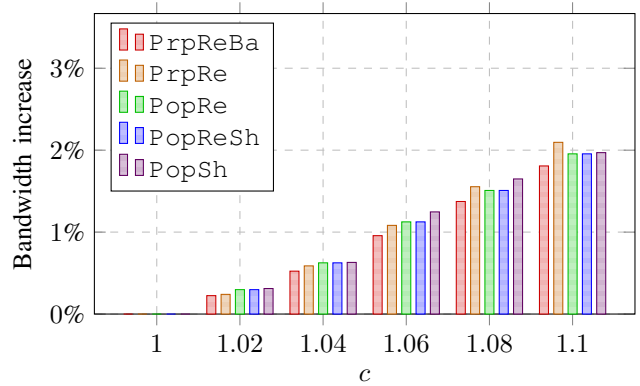

Next, we analyze the performance of our algorithms in terms of the bandwidth increase generated by the padding of the file sizes.
Figure~\ref{Figure_npm_bandwidth_abs} depicts the percentage increase in the bandwidth for the same values of $c$ previously analyzed.
The bandwidth increase is a non-decreasing function of the padding constraint, i.e. allowing the servers to pad more the files in order to find optimal \theLeakage{} also generates an increase in the bandwidth.
Hence, we conclude that, as the padding limitations are relaxed, the intuition behind the fact that it is better for reducing leakage to pad more the objects is reinforced.
As suggested by the graph, the algorithm \PrpReBa{} proves to be efficient in reducing the average padding over a network, as it always gives the smallest bandwidth increase out of the $5$ algorithms under analysis.
Conversely, \PopSh{} gives the highest average padding in all the cases except for $c=1.1$.

From now on, we analyze the large NodeJS dataset.
The only algorithms efficient enough to run on this data are \PrpReBa{}, \PrpRe{} and to a lesser extent \PopSh{}.
Once more, we will plot \theLeakage{}, Shannon leakage, bandwidth increase and runtime.

In Figure~\ref{Figure_large_npm_renyi}, as compared to Figure~\ref{Figure_npm_renyi}, we can see that a larger dataset allows for greater differences in \theLeakage{} between the optimal algorithm \PrpRe{} and \PopSh{}.
Moreover, the results of this figure corroborate the fact that \PrpRe{} and \PrpReBa{} give the same \theLeakage{}.

\begin{figure}[!ht]
  \begin{center}
    \begin{tikzpicture}[]
  \begin{axis}[
    /pgfplots/adHocBarPlot,
    ylabel=\theLeakage{},
    enlarge y limits=0.25,
    ]
    
    \addplot[custom red,]
      table[x=c, y=renyi, ]{figures/nodejs-large-data/PRP_Renyi_Bandwidth.txt};
    \addlegendentry{\PrpReBa{}};
    
    \addplot[custom orange,]
      table[x=c, y=renyi, ]{figures/nodejs-large-data/PRP_Renyi_only.txt};
    \addlegendentry{\PrpRe{}};

    \addplot[custom violet,]
      table[x=c, y=renyi, ]{figures/nodejs-large-data/POP_Shannon_only.txt};
    \addlegendentry{\PopSh{}};
  \end{axis}
\end{tikzpicture}
  \end{center}
  \caption{\TheLeakage{} using the entire NodeJS dataset}
  \label{Figure_large_npm_renyi}
\end{figure}

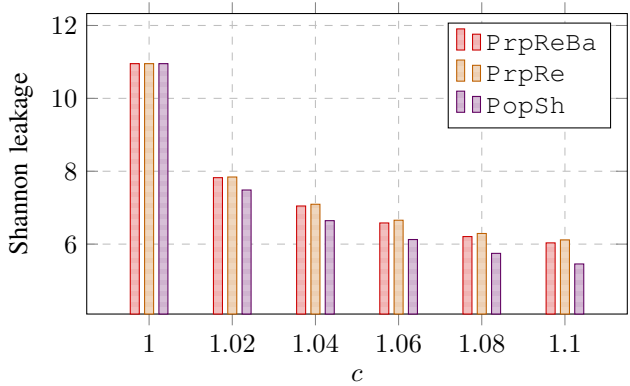
\begin{figure}[!ht]
  \begin{center}
    \begin{tikzpicture}[]
  \begin{axis}[
    /pgfplots/adHocBarPlot,
    ylabel=Shannon leakage,
    enlarge y limits=0.25,
    ]
    
    \addplot[custom red,]
      table[x=c, y=shannon, ]{figures/nodejs-large-data/PRP_Renyi_Bandwidth.txt};
    \addlegendentry{\PrpReBa{}};
    
    \addplot[custom orange,]
      table[x=c, y=shannon, ]{figures/nodejs-large-data/PRP_Renyi_only.txt};
    \addlegendentry{\PrpRe{}};

    \addplot[custom violet,]
      table[x=c, y=shannon, ]{figures/nodejs-large-data/POP_Shannon_only.txt};
    \addlegendentry{\PopSh{}};
  \end{axis}
\end{tikzpicture}
  \end{center}
  \caption{Shannon leakage using the entire NodeJS dataset}
  \label{Figure_large_npm_shannon}
\end{figure}

Figure~\ref{Figure_large_npm_shannon}, which analyses the Shannon leakage, depicts once again an opposite behavior compared to the one in Figure~\ref{Figure_large_npm_renyi}.
Naturally, \PopSh{} has a smaller Shannon leakage than our algorithms \PrpRe{} and \PrpReBa{}.
Once again, we notice that both the Shannon leakage and the \theLeakage{} follow a decreasing trend in the value of $c$, exactly as observed in the experiments performed for the small NodeJS dataset.

It can be observed in Figure~\ref{Figure_large_npm_bandwidth_abs} the proficiency of the optimization \PrpReBa{} compared to its initial variant \PrpRe{} and to \PopSh{}.
This makes \PrpReBa{} ideal for a web server who wants to limit the total expected average bandwidth on top of the individual bounds set by $c$.

Lastly, Figure~\ref{Figure_large_npm_runtime} highlights the scalability of the three algorithms, especially \PrpRe{} and \PrpReBa{}.
For all values of $c$ plotted in this graph, the runtime for \PrpRe{} is under $7$ seconds, which makes it the fastest algorithm among all of its competitors, while optimizing its measure of privacy, \theLeakage{}.
We can also see that \PrpReBa{} has a great running time compared to \PopSh{}, peaking at $c=1.1$ with around $3$ minutes versus $15$ minutes, which is a clear advantage.

\begin{figure}[!ht]
  \begin{center}
    \begin{tikzpicture}[]
  \begin{axis}[
    /pgfplots/adHocBarPlot,
    ylabel=Bandwidth increase,
    legend pos=north west,
    yticklabel={\pgfmathprintnumber\tick\%},
    ]
    
    \addplot[custom red,]
      table[x=c, y=bandwidth_percent, ]{figures/nodejs-large-data/PRP_Renyi_Bandwidth.txt};
    \addlegendentry{\PrpReBa{}};
    
    \addplot[custom orange,]
      table[x=c, y=bandwidth_percent, ]{figures/nodejs-large-data/PRP_Renyi_only.txt};
    \addlegendentry{\PrpRe{}};

    \addplot[custom violet,]
      table[x=c, y=bandwidth_percent, ]{figures/nodejs-large-data/POP_Shannon_only.txt};
    \addlegendentry{\PopSh{}};
  \end{axis}
\end{tikzpicture}
  \end{center}
  \caption{Bandwidth increase using the NodeJS dataset}
   \label{Figure_large_npm_bandwidth_abs}
\end{figure}
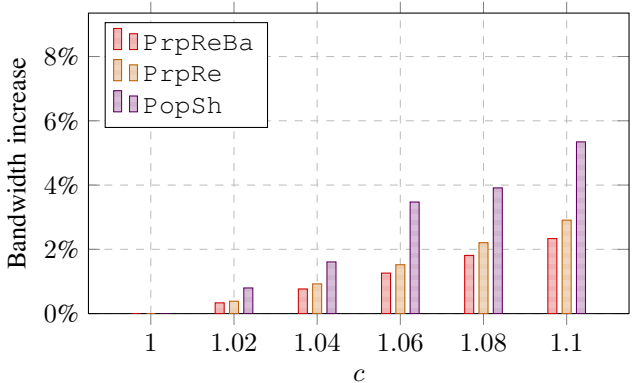

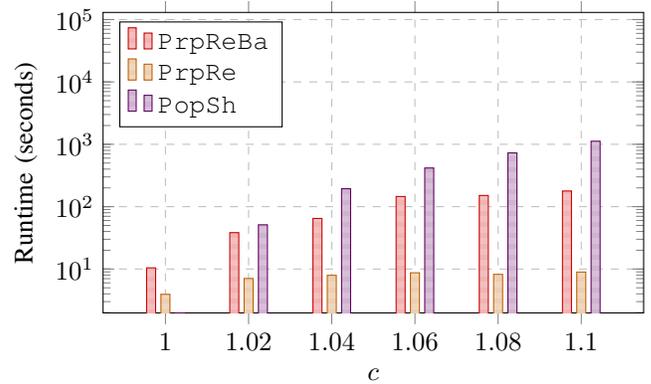
\begin{figure}[!ht]
  \begin{center}
    \begin{tikzpicture}[]
  \begin{axis}[
    /pgfplots/adHocBarPlot,
    ylabel=Runtime (seconds),
    legend pos=north west,
    ymode=log,
    ]
    
    \addplot[custom red,]
      table[x=c, y=elapsed, ]{figures/latest-version/423450-PrpReBa.txt};
    \addlegendentry{\PrpReBa{}};
    
    
    \addplot[custom orange,]
      table[x=c, y=elapsed, ]{figures/latest-version/423450-PrpRe.txt};
    \addlegendentry{\PrpRe{}};

    \addplot[custom violet,]
      table[x=c, y=elapsed, ]{figures/latest-version/423450-PopSh.txt};
    \addlegendentry{\PopSh{}};
  \end{axis}
\end{tikzpicture}
  \end{center}
  \caption{NodeJS runtime plot}
   \label{Figure_large_npm_runtime}
\end{figure}

\section{Future Work}

One natural improvement in our work would be to make our extended algorithms, namely \PopReSh{} and \PrpReBa{}, find the padding-scheme with optimal Shannon mutual information, respectively minimal average padding  among all the channels with minimal \theLeakage{}.
In order to better delve into the problem, let us consider a graph $G$ for each algorithm, such as the one in Figure~\ref{Figure_solid_dashed}, where the nodes represent all the \theLeakageOptimal{} padding-schemes.
We draw a \emph{dotted edge} between two vertices if we can apply exactly one \emph{move} (each algorithm has a different type of \emph{move}) to one of the padding-schemes and reach the other one.
On the other hand, we draw a \emph{straight edge} between two padding-schemes if we can apply a single \emph{move} to one function in order to reach the other one, and, furthermore, preserve the set of maximal elements on the columns, as in the solution for \PopReSh{}, or \PrpReBa{}.
Clearly, every two vertices united by a straight edge are also united by a dotted one.
Denote by $G_{\emph{\text{dotted edges}}}$ and $G_{\emph{\text{straight edges}}}$ the sub-graphs of $G$ containing solely dotted, respectively straight edges.
Our algorithms find the optimal vertex in the connected component of $G_{\emph{\text{dotted edges}}}$ which includes the node returned by \PopRe{}.
A future upgrade would be to find the optimal padding-scheme in the connected component of $G_{\emph{\text{straight edges}}}$, or in the whole graph $G$. 

For instance, if \PopRe{} returns one of the blue nodes in Figure~\ref{Figure_solid_dashed}, then \PopReSh{} will find the blue bold node, as it minimizes Shannon leakage in its connected component, thus it will fail to find the bold green node, which achieves the actual global minimum. 

\begin{figure}[!ht] 
\begin{center}
  \def\Rcm{0.8cm}
\def\Ccm{0.8cm}
\begin{tikzpicture}[x={(0,-\Ccm)}, y={(\Rcm, 0)}, every node/.style={minimum width=0.5*\Rcm, minimum height=0.5*\Ccm, outer sep=0pt, anchor=north west}]

  \tikzstyle{cyan} = [fill=cyan!50!white,shape=circle, draw=black];
  \tikzstyle{green} = [fill=green!50!white,shape=circle, draw=black];
  \tikzstyle{orange} = [fill=orange!50!white,shape=circle, draw=black];
  \tikzstyle{bold} = [very thick];

  \draw[->] (5.5, 0.7) -- (0.0, 0.7);
  \draw[->] (5.5, 0.7) -- (5.5, 7.0);
  \node[rotate=90] at (4.5, 0.0) {Shannon leakage};
  
  \node[orange, bold] (v0) at (4.2, 1.1) {};
  \node[orange] (v1) at (1.0, 1.0) {};
  \node[orange] (v2) at (1.2, 2.4) {};
  \node[orange] (v3) at (2.8, 1.2) {};
  \node[cyan] (v4) at (1.1, 3.6) {};
  \node[cyan, bold] (v5) at (4.0, 2.8) {};
  \node[cyan] (v6) at (2.3, 4.2) {};
  \node[cyan] (v7) at (0.8, 6.0) {};
  \node[green] (v8) at (2.8, 5.6) {};
  \node[green, bold] (v9) at (4.6, 4.8) {};

  \draw[      ] (v1) -- (v2);
  \draw[      ] (v2) -- (v3);
  \draw[dashed] (v3) -- (v5);
  \draw[dashed] (v2) -- (v4);
  \draw[      ] (v4) -- (v5);
  \draw[      ] (v4) -- (v6);
  \draw[      ] (v6) -- (v7);
  \draw[dashed] (v6) -- (v8);
  \draw[      ] (v8) -- (v9);
  \draw[      ] (v3) -- (v0);

\end{tikzpicture}
\end{center}
\caption{Graph containing the \theLeakage{} optimal padding-schemes}
\label{Figure_solid_dashed}
\end{figure}
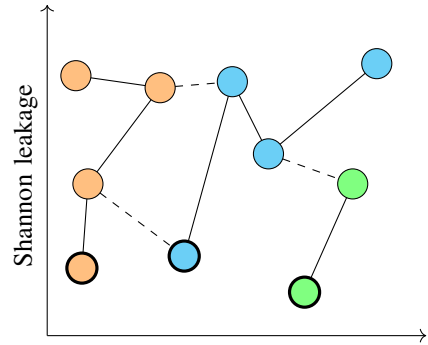

Secondly, we prioritized the minimization of \theLeakage{} over the measure of average padding. Future work can be directed in finding an optimal trade-off between the two, not necessarily giving absolute priority to the bandwidth of each costumer using the constant $c$. 

Lastly, our paper and \cite{reed2021optimally} provided algorithms to minimize Rényi-min and Shannon leakage.
As mentioned in the subsection Problem formalization, we could also try to minimize $\mathbb{I}_{\alpha}$, the \theLeakage{}.
However, thanks to the applicability of the notion of \theLeakage{} to measure the success of a real adversary, we believe that \theLeakage{} is the most adequate measure of leakage for this problem.

\section{Conclusion}

In this paper we used \theLeakage{} as the main measure for privacy, arguing that it is the most suitable for simulating a real-life attacker.
We built two types of algorithms that minimize \theLeakage{}, treating not only the per-object-padding case in which the files are padded in a deterministic way, but also the per-request-padding, in which each single file is padded to every available size with a given probability.

We provided several improvements that lowered the average padding and the Shannon leakage.
We want to highlight the algorithms \PrpRe{} and \PrpReBa{}, which have proven a better running time compared to all of their competitors.
Nonetheless, the latter one also has an impressive bandwidth advantage.
We analyzed all of our algorithms numerically, using many artificial examples and also the NodeJS dataset.

Furthermore, we described a method for designing padding-schemes for multiple servers that protect the identities of both the files and the servers.
The servers should simply agree on a fixed standard set of possible output sizes and run our algorithms.

\bibliographystyle{alpha}
\bibliography{bibliography}

\newcommand{\etalchar}[1]{$^{#1}$}
\begin{thebibliography}{WBMM07}

\bibitem[ACPS12]{Alvim:12:CSF}
M{\'a}rio~S. Alvim, Konstantinos Chatzikokolakis, Catuscia Palamidessi, and
  Geoffrey Smith.
\newblock Measuring information leakage using generalized gain functions.
\newblock In {\em Proceedings of the 25th IEEE Computer Security Foundations
  Symposium (CSF)}, pages 265--279, 2012.

\bibitem[BBK17]{bhuyan2017network}
Monowar~H Bhuyan, Dhruba~K Bhattacharyya, and Jugal~K Kalita.
\newblock {\em Network traffic anomaly detection and prevention: concepts,
  techniques, and tools}.
\newblock Springer, 2017.

\bibitem[Che17]{cherubin2017bayes}
Giovanni Cherubin.
\newblock Bayes, not naïve: Security bounds on website fingerprinting
  defenses.
\newblock {\em Proceedings on Privacy Enhancing Technologies},
  2017(4):215--231, oct 2017.

\bibitem[CHJ17]{cherubin2017website}
Giovanni Cherubin, Jamie Hayes, and Marc Ju{\'a}rez.
\newblock Website fingerprinting defenses at the application layer.
\newblock {\em Proc. Priv. Enhancing Technol.}, 2017(2):186--203, 2017.

\bibitem[CKS{\etalchar{+}}09]{callado2009survey}
Arthur Callado, Carlos Kamienski, G{\'e}za Szab{\'o}, Bal{\'a}zs~P{\'e}ter
  Gero, Judith Kelner, St{\^e}nio Fernandes, and Djamel Sadok.
\newblock A survey on internet traffic identification.
\newblock {\em IEEE communications surveys \& tutorials}, 11(3):37--52, 2009.

\bibitem[HKDV15]{hajjar2015network}
Amjad Hajjar, Jawad Khalife, and Jes{\'u}s D{\'\i}az-Verdejo.
\newblock Network traffic application identification based on message size
  analysis.
\newblock {\em Journal of Network and Computer Applications}, 58:130--143,
  2015.

\bibitem[IWK19]{issa2019operational}
Ibrahim Issa, Aaron~B Wagner, and Sudeep Kamath.
\newblock An operational approach to information leakage.
\newblock {\em IEEE Transactions on Information Theory}, 66(3):1625--1657,
  2019.

\bibitem[LL06]{liberatore2006inferring}
Marc Liberatore and Brian~Neil Levine.
\newblock Inferring the source of encrypted http connections.
\newblock In {\em Proceedings of the 13th ACM conference on Computer and
  communications security}, pages 255--263, 2006.

\bibitem[LXZ{\etalchar{+}}16]{li2016demographics}
Huaxin Li, Zheyu Xu, Haojin Zhu, Di~Ma, Shuai Li, and Kai Xing.
\newblock Demographics inference through wi-fi network traffic analysis.
\newblock In {\em IEEE INFOCOM 2016-The 35th Annual IEEE International
  Conference on Computer Communications}, pages 1--9. IEEE, 2016.

\bibitem[MCPS12]{m2012measuring}
S~Alvim M'rio, Kostas Chatzikokolakis, Catuscia Palamidessi, and Geoffrey
  Smith.
\newblock Measuring information leakage using generalized gain functions.
\newblock In {\em 2012 IEEE 25th Computer Security Foundations Symposium},
  pages 265--279. IEEE, 2012.

\bibitem[MMS{\etalchar{+}}14]{mciver2014abstract}
Annabelle McIver, Carroll Morgan, Geoffrey Smith, Barbara Espinoza, and Larissa
  Meinicke.
\newblock Abstract channels and their robust information-leakage ordering.
\newblock In {\em International Conference on Principles of Security and
  Trust}, pages 83--102. Springer, 2014.

\bibitem[PLB11]{pouzols2011mining}
Federico~Montesino Pouzols, Diego~R Lopez, and Angel~Barriga Barros.
\newblock {\em Mining and Control of Network Traffic by Computational
  Intelligence}, volume 342.
\newblock Springer, 2011.

\bibitem[PPS22]{githubRepository}
Carlos Pinzón, Cezara Petrui, and Sebastian Simon.
\newblock min-leakage-padding.
\newblock \url{https://github.com/caph1993/min-leakage-padding}, 2022.
\newblock Accessed: August 2022.

\bibitem[PR18]{palamidessi2018feature}
Catuscia Palamidessi and Marco Romanelli.
\newblock Feature selection with r{\'e}nyi min-entropy.
\newblock In {\em IAPR Workshop on Artificial Neural Networks in Pattern
  Recognition}, pages 226--239. Springer, 2018.

\bibitem[PW00]{park2000self}
Kihong Park and Walter Willinger.
\newblock {\em Self-Similar Network Traffic: An Overview}.
\newblock Wiley Online Library, 2000.

\bibitem[Rom20]{romanelli2020machine}
Marco Romanelli.
\newblock {\em Machine learning methods for privacy protection: leakage
  measurement and mechanisms design}.
\newblock PhD thesis, Institut Polytechnique de Paris; Universit{\`a} degli
  studi (Sienne, Italie), 2020.

\bibitem[RR21]{reed2021optimally}
Andrew~C. Reed and Michael~K. Reiter.
\newblock Optimally hiding object sizes with constrained padding, 2021.

\bibitem[SJVR{\etalchar{+}}18]{siby2018dns}
Sandra Siby, Marc Juarez, Narseo Vallina-Rodriguez, Carmela Troncoso, et~al.
\newblock Dns privacy not so private: the traffic analysis perspective.
\newblock In {\em The 11th Workshop on Hot Topics in Privacy Enhancing
  Technologies (HotPETs 2018)}, 2018.

\bibitem[Smi09a]{Smith:09:FOSSACS}
Geoffrey Smith.
\newblock On the foundations of quantitative information flow.
\newblock In Luca {de Alfaro}, editor, {\em Proceedings of the 12th
  International Conference on Foundations of Software Science and Computation
  Structures (FOSSACS 2009)}, volume 5504 of {\em LNCS}, pages 288--302, York,
  UK, 2009. Springer.

\bibitem[Smi09b]{smith2009foundations}
Geoffrey Smith.
\newblock On the foundations of quantitative information flow.
\newblock In {\em International Conference on Foundations of Software Science
  and Computational Structures}, pages 288--302. Springer, 2009.

\bibitem[Smi15]{smith2015recent}
Geoffrey Smith.
\newblock Recent developments in quantitative information flow (invited
  tutorial).
\newblock In {\em 2015 30th Annual ACM/IEEE Symposium on Logic in Computer
  Science}, pages 23--31. IEEE, 2015.

\bibitem[Son01]{song2001timing}
Dawn Song.
\newblock Timing analysis of keystrokes and ssh timing attacks.
\newblock In {\em Proc. of 10th USENIX Security Symposium, 2001}, 2001.

\bibitem[SSW{\etalchar{+}}02]{sun2002statistical}
Qixiang Sun, Daniel~R Simon, Yi-Min Wang, Wilf Russell, Venkata~N Padmanabhan,
  and Lili Qiu.
\newblock Statistical identification of encrypted web browsing traffic.
\newblock In {\em Proceedings 2002 IEEE Symposium on Security and Privacy},
  pages 19--30. IEEE, 2002.

\bibitem[WBMM07]{wright2007language}
Charles~V Wright, Lucas Ballard, Fabian Monrose, and Gerald~M Masson.
\newblock Language identification of encrypted voip traffic: Alejandra y
  roberto or alice and bob?
\newblock In {\em USENIX Security Symposium}, volume~3, pages 43--54, 2007.

\bibitem[WCM09]{wright2009traffic}
Charles~V Wright, Scott~E Coull, and Fabian Monrose.
\newblock Traffic morphing: An efficient defense against statistical traffic
  analysis.
\newblock In {\em NDSS}, volume~9. Citeseer, 2009.

\bibitem[Wel05]{welzl2005network}
Michael Welzl.
\newblock {\em Network congestion control: managing internet traffic}.
\newblock John Wiley \& Sons, 2005.

\bibitem[WUB15]{wampler2015information}
Christopher Wampler, Selcuk Uluagac, and Raheem Beyah.
\newblock Information leakage in encrypted ip video traffic.
\newblock In {\em 2015 IEEE Global Communications Conference (GLOBECOM)}, pages
  1--7. IEEE, 2015.

\end{thebibliography}

\end{document}